\newcommand{\tho}{{\textrm{\thorn}}}
\def \BEA { \begin{eqnarray}}
\def \EEA {\end{eqnarray}}
\def \BE {\begin{equation}}
\def \EE {\end{equation}}
\def\dd{\mathrm{d}}
\newcommand{\bg}{\mbox{\boldmath$g$}}
\def \WDS #1 {\mbox{$\Phi_{#1}^{S}$}}
\def \WDA #1 {\mbox{$\Phi_{#1}^{A}$}}
\def \WD #1 {\mbox{$\Phi_{#1}$}}
\def \mi {\stackrel{i}{m}}
\def \mj {\stackrel{j}{m}}
\def \mk {\stackrel{k}{m}}
\def \mr {\stackrel{r}{m}}
\def \ms {\stackrel{s}{m}}	
\def \mz {\stackrel{z}{m}}
\def \mq {\stackrel{q}{m}}
\def \mo {\stackrel{o}{m}}
\def \mD {\stackrel{2}{m}}
\def \mT {\stackrel{3}{m}}
\def \mC {\stackrel{4}{m}}
\newcommand{\sss}{\operatorname{s}}
\def \mio #1 {\mi_{#1}\ ^{  \! \! \! \! 0}} 
\def \mjo #1 {\mj_{#1}\ ^{  \! \! \! \! 0}} 
\def \mko #1 {\mk_{#1}\ ^{  \! \! \! \! 0}} 
\def \mro #1 {\mr_{#1}\ ^{  \! \! \! \! 0}} 
\def \mso #1 {\ms_{#1}\ ^{  \! \! \! \! 0}} 
\def \mpo #1 {\mp_{#1}\ ^{  \! \! \! \! 0}} 
\def \mzo #1 {\mz_{#1}\ ^{  \! \! \! \! 0}} 
\def \mqo #1 {\mq_{#1}\ ^{  \! \! \! \! 0}} 
\def \moo #1 {\mo_{#1}\ ^{  \! \! \! \! 0}} 
\def \mDo #1 {\mD_{#1}\ ^{  \! \! \! \! 0}} 
\def \mTo #1 {\mT_{#1}\ ^{  \! \! \! \! 0}} 
\def \mCo #1 {\mC_{#1}\ ^{  \! \! \! \! 0}} 
\def \miJ #1 {\mi_{#1}\ ^{  \! \! \! \! (1)}} 
\def \mjJ #1 {\mj_{#1}\ ^{  \! \! \! \! (1)}} 
\def \mkJ #1 {\mk_{#1}\ ^{  \! \! \! \! (1)}} 
\def \mrJ #1 {\mr_{#1}\ ^{  \! \! \! \! (1)}}
\def \bl {\mbox{\boldmath{$\ell$}}}
\def \hbl {\mbox{\boldmath{$\hat \ell$}}}
\def \bn {\mbox{\boldmath{$n$}}}
\def \hbn {\mbox{\boldmath{$\hat n$}}}
\newcommand{\be}{\begin{equation}}
\newcommand{\ee}{\end{equation}}
\newcommand{\beqn}{\begin{eqnarray}}
\newcommand{\eeqn}{\end{eqnarray}}
\newcommand{\ba}{\begin{array}}
\newcommand{\ea}{\end{array}}
\def \BEAH {\begin{eqnarray*}}
\def \EEAH {\end{eqnarray*}}
\def \BEA {\begin{eqnarray}}
\def \EEA {\end{eqnarray}}
\def \BDM {\begin{displaymath}}
\def \EDM {\end{displaymath}}
\def \T {\bigtriangleup  }
\newcommand{\tb}{\textcolor{black}}
\newtheorem{proposition}{Proposition}
\newtheorem{definition}[proposition]{Definition}
\newtheorem{remark}{Remark}
\newtheorem{lemma}{Lemma}
\def \H {\mathcal{H}}
\def \D {\mathrm{D}}
\newcommand*\Bg{\ensuremath{\boldsymbol{g}}}
\newcommand*\BR{\ensuremath{\boldsymbol{R}}}
\newcommand*\BC{\ensuremath{\boldsymbol{C}}}
\newcommand*\BS{\ensuremath{\textbf{S}}}
\newcommand*\BT{\ensuremath{\boldsymbol{T}}}
\begin{document}

\title{Almost universal spacetimes in higher-order gravity theories}

\date{\today}

\author{M. Kuchynka}
\email{kuchynkm@gmail.com}
\affiliation{Institute of Mathematics of the Czech Academy of Sciences, \v Zitn\' a 25, 115 67 Prague 1, Czech Republic}
\affiliation{Institute of Theoretical Physics, Faculty of Mathematics and Physics, Charles University in Prague,
V Hole\v{s}ovi\v{c}k\'ach 2, 180 00 Prague 8, Czech Republic}
\author{T. M\' alek}
\email{malek@math.cas.cz}
\author{V. Pravda}
\email{pravda@math.cas.cz}
\author{A. Pravdov\' a}
\email{pravdova@math.cas.cz}
\affiliation{Institute of Mathematics of the Czech Academy of Sciences, \v Zitn\' a 25, 115 67 Prague 1, Czech Republic}

\begin{abstract}
We study almost universal spacetimes -- spacetimes for which the field equations of any generalized gravity with the Lagrangian constructed from the metric, the Riemann tensor and its covariant derivatives of arbitrary order reduce to one single differential equation and one algebraic condition for the Ricci scalar. We prove that all $d$-dimensional Kundt spacetimes of Weyl type III   and traceless Ricci type N  are almost universal. Explicit examples of Weyl type II almost universal Kundt metrics are also given.  The considerable simplification of the field equations of higher-order gravity theories for almost universal spacetimes is then employed to study new Weyl type II, III, and N vacuum solutions to quadratic gravity in arbitrary dimension and six-dimensional conformal gravity. Necessary conditions for almost universal metrics are also studied.
\end{abstract}

\maketitle

\section{Introduction}

In effective field theories, the Einstein equations are modified by adding further terms to the Einstein--Hilbert action, leading to  the Lagrangian of the form
\be
L= G^{-1} (R-2 \Lambda)+f(\Bg,\BR,\nabla \BR,\dots)\,. \label{lagr}
\ee
The resulting field equations are, in most cases, considerably more complicated than the Einstein equations, and therefore very few exact solutions to modified gravities are known. Nevertheless, there exists a class of spacetimes, the so-called universal spacetimes \cite{Coleyetal08,HerPraPra14,Herviketal15}, for which all but one vacuum field equations of any theory of the form \eqref{lagr} are identically satisfied. The remaining field equation reduces to an algebraic constraint $\Lambda=F(I_i,\alpha_i)$ relating a cosmological constant $\Lambda$ with constant curvature invariants $I_i$  and constant parameters $\alpha_i$ of the theory. Thus, with an appropriate choice of  $\Lambda$, universal spacetimes are exact vacuum solutions to any theory of the form \eqref{lagr}.

Recently, it has been shown that for certain non-Einstein (and thus non-universal) spacetimes, field equations of any theory of the form \eqref{lagr} are also dramatically simplified. This has been observed in the case of  AdS waves and pp-waves \cite{gurses2013,Gurses:2014soa} and for Kerr-Schild-Kundt metrics with an (A)dS background \cite{Gurses:2016moi}. A closely related result in string theory, showing that gravitational waves in AdS do not receive any $\alpha'$ corrections, has been obtained in \cite{HorItz99}. All these spacetimes are of Weyl type N in the algebraic classification of tensors \cite{Coleyetal04} (see also \cite{OrtPraPra13rev} for a recent review).

In this paper, we set out to investigate these ``almost universal'' spacetimes in a much more general context. By studying necessary and sufficient conditions for ``almost universality'', we arrive at examples of Weyl type II, III, and N almost universal spacetimes. To make a connection with the Kerr-Schild approach of \cite{Gurses:2016moi}, we  also study almost universal Kerr-Schild spacetimes; however, instead of  (A)dS, we are able to use any type II, III, and N universal Kundt spacetime as a background metric. These results are also employed to construct new vacuum solutions to quadratic and cubic theories of gravity.

Let us proceed by introducing two classes of almost universal spacetimes, TN and TNS spacetimes (TNS $\subset$ TN).

\begin{definition}[\tb{Almost universal spacetimes}]
	\tb{Almost universal spacetimes (or equivalently TN spacetimes\footnote{TN(traceless type N) - all rank-2 tensors  constructed from the Riemann tensor and  its covariant derivatives of an arbitrary order are of traceless type N (i.e., of the form \eqref{TN}).})  are spacetimes, for which there exists a null vector $\bl$ such that for every symmetric rank-2 tensor $E_{ab}$} \tb{constructed polynomially from a metric, the Riemann tensor and  its covariant derivatives of an arbitrary order there exist a constant $\lambda$ and a function $\phi$ such that }
	\be
	E_{ab} =\lambda g_{ab}+\phi \ell_a \ell_b\,.\label{TN}
	\ee
	\tb{A TN spacetime is called TNS if in addition
		the last term in \eqref{TN} reduces to}
	\be 
	\phi\ell_a\ell_b=
	 \sum_{n=0}^{N}a_n\Box^n S_{ab}
	\,, \label{TNS}
	\ee
	\tb{where $S_{ab}$ is the traceless Ricci tensor and $a_i$ are constants. }
\end{definition}

\tb{Note that for TN spacetimes, tracelessness of $S_{ab}$ guarantees $\Box^n S_{ab} \propto \ell_a\ell_b $ and for TNS spacetimes, there are no terms involving the Weyl tensor  present in \eqref{TNS}, cf. \eqref{FKWC-C9}.}

It follows directly from the definition of TN spacetimes that all but two vacuum field equations of any theory of the form \eqref{lagr} hold identically. Furthermore, \tb{one of these two equations}, the equation corresponding to the $\lambda$ term in \eqref{TN}, 
reduces, similarly as in the case of universal spacetimes, to an algebraic equation
$\Lambda=F(I_i,\alpha_i)$ \tb{(see, e.g., eqs. \eqref{QG1} and \eqref{QG3})}.  Thus, the vacuum field equations lead only to one differential equation corresponding to the $\phi$ term in \eqref{TN}. Obviously, if the null radiation term is allowed, this equation can be omitted. 
  
\tb{Let us now briefly summarize selected  results of this paper.}

The main result of section \ref{sec_Nec} that focuses on necessary conditions for TN spacetimes reads 
\begin{proposition}[Necessary conditions for non-Einstein TN spacetimes]
	\label{nec_TN}
	Non-Einstein {TN} 
	spacetimes are necessarily CSI Kundt spacetimes
	of Weyl type II or more special.
\end{proposition}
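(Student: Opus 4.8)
The plan is to extract everything from the defining property \eqref{TN} applied to a short list of judiciously chosen admissible tensors $E_{ab}$.

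First I would pin down the Ricci tensor and the CSI property. The Ricci tensor $R_{ab}$ is itself an admissible $E_{ab}$, so \eqref{TN} gives $R_{ab}=\lambda g_{ab}+\phi\,\ell_a\ell_b$ with $\lambda$ a genuine constant; tracing and using $\ell_a\ell^a=0$ yields $R=\lambda d=\mathrm{const}$ and identifies the traceless Ricci tensor $S_{ab}=\phi\,\ell_a\ell_b$ as traceless type N aligned with $\ell$ (non-Einstein meaning $\phi\not\equiv0$). To obtain CSI, I would observe that for any scalar polynomial curvature invariant $I$ the product $I\,g_{ab}$ is again an admissible tensor, so $I\,g_{ab}=\lambda_I g_{ab}+\phi_I\ell_a\ell_b$ with $\lambda_I$ constant; contracting $(I-\lambda_I)g_{ab}=\phi_I\ell_a\ell_b$ with two spacelike frame vectors $m_i^a m_j^b$ annihilates the right-hand side and forces $I=\lambda_I=\mathrm{const}$. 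Hence every invariant is constant and the spacetime is CSI.

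Next I would establish the Kundt property for this same $\ell$. Constancy of $R$ together with the contracted Bianchi identity gives $\nabla^a S_{ab}=0$; expanding $\nabla^a(\phi\,\ell_a\ell_b)=0$ isolates $\phi\,\ell^c\nabla_c\ell_b=-(\ell^a\nabla_a\phi+\phi\,\nabla^a\ell_a)\,\ell_b$, so the acceleration is proportional to $\ell_b$ and $\ell$ is geodesic. For the optical scalars I would feed $\Box S_{ab}$ into \eqref{TN}: since $\Box$ commutes with the trace and $S$ is traceless, the trace part must vanish, leaving $\Box S_{ab}=\phi_1\ell_a\ell_b$. Projecting $\Box(\phi\,\ell_a\ell_b)$ onto the transverse screen with $m_i^a m_j^b$ kills every term except $2\phi\,(\nabla^c\ell_a)(\nabla_c\ell_b)\,m_i^a m_j^b$, which therefore vanishes; expanding the $c$-contraction in the null frame and using geodesy collapses this to $\rho\rho^{\mathrm T}$ built from the optical matrix $\rho_{ij}=m_i^a m_j^b\nabla_a\ell_b$, a real positive-semidefinite matrix, whence $\rho_{ij}=0$. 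Thus $\ell$ is geodesic, non-expanding, shear-free and twist-free, and the metric is Kundt.

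Finally I would bound the Weyl type by applying \eqref{TN} to the quadratic tensor $B_{ab}=C_{acde}C_b{}^{cde}$, so that $B_{ab}=\mu g_{ab}+\psi\,\ell_a\ell_b$ and all its boost-weight $+1$ and $+2$ components relative to $\ell$ vanish. The task is to translate this into the vanishing of the positive-boost-weight components of the Weyl tensor itself, i.e. into Weyl type II or more special aligned with $\ell$. I expect this to be the main obstacle: in Lorentzian signature the top part of $B_{ab}$, obtained from $C_{acde}C_b{}^{cde}\ell^a\ell^b$, is a contraction with an indefinite metric rather than a manifest sum of squares, so isolating the highest Weyl components $C_{0i0j}$ requires a careful positivity argument on the transverse screen; eliminating the remaining boost-weight $+1$ components then typically needs one or two further admissible tensors (for instance higher $\Box$-derivatives of $B_{ab}$ or mixed Ricci--Weyl contractions), after which the spacetime is CSI Kundt of Weyl type II or more special.
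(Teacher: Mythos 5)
Your first three steps are correct and, modulo presentation, coincide with the paper's own argument: the paper proves CSI by noting every invariant is the trace of an admissible rank-2 tensor (your variant with $I\,g_{ab}$ is equivalent), derives geodesy of $\bl$ from the contracted Bianchi identity exactly as you do, and kills the optical matrix from $\Box S_{ab}$ --- the paper reads off the boost-weight-zero component $(\Box S_{ab})\ell^a n^b=-\omega'\rho_{ij}\rho_{ij}$ while you project transversely onto $2\phi\,\rho_{ik}\rho_{jk}$; these are the same computation, and both use $\phi\neq 0$, i.e.\ the non-Einstein hypothesis, in the same (pointwise) way the paper does.

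The final step, however, is a genuine gap, and the obstacle you flag yourself is fatal to the route you chose. In four dimensions one has the identity $C_{acde}C_b{}^{cde}=\tfrac{1}{4}\,g_{ab}\,C_{cdef}C^{cdef}$, so your tensor $B_{ab}$ is \emph{automatically} of the form \eqref{TN} and carries no information whatsoever about the Weyl type; in higher dimensions the b.w.\ $+2$ part of $B_{ab}$ consists of indefinite cross terms (schematically $\Omega\cdot\Phi$ and $\Psi\cdot\Psi$ products), never a sum of squares of the b.w.\ $+2$ components $\Omega_{ij}$, so its vanishing does not force $\Omega_{ij}=0$, and throwing in $\Box$-derivatives of $B_{ab}$ or mixed Ricci--Weyl contractions does not change this: positive-boost-weight Weyl components are in general not detectable by scalar or rank-2 curvature polynomials in Lorentzian signature. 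The missing idea is differential rather than algebraic. Once you have $\kappa_i=0$ and $\rho_{ij}=0$, with the Ricci tensor aligned and of type N (so the b.w.\ $+2$ and $+1$ parts of the Riemann tensor reduce to the Weyl components $\Omega_{ij}$ and $\Psi_{ijk}$), the higher-dimensional Ricci identities --- equations (NP1) and (NP3) of \cite{Durkeeetal10}, whose left-hand sides are derivatives of $\kappa_i$ and $\rho_{ij}$ and hence vanish --- immediately yield $\Omega_{ij}=0$ and $\Psi_{ijk}=0$. This is how the paper concludes, in one line, the statement you anticipated would require a ``careful positivity argument'' that in fact does not exist.
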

This proposition holds in an arbitrary dimension. In the case of four dimensions, we arrive at a more general result including also Einstein TN spacetimes.

In section \ref{sec_sufficient}, we focus on sufficient conditions for  TN and TNS spacetimes, proving the 
following main results
\begin{proposition}[Sufficient conditions for TN spacetimes]\label{U}
	All $d$-dimensional Kundt spacetimes of Weyl type III \tb{or N} and traceless Ricci type N\footnote{Throughout this paper, by traceless Ricci type N we mean a Ricci tensor of the form \eqref{TN}.} are TN. 
\end{proposition}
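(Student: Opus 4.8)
The plan is to verify the defining condition \eqref{TN} for an arbitrary curvature-polynomial tensor $E_{ab}$ by expanding it into monomials and controlling each monomial with a boost-weight count relative to the Kundt null direction $\bl$. First I would adopt a null frame $\{\bl,\bn,\bm_i\}$ with $\bl$ aligned along the multiple WAND and record the boost-weight content of the building blocks. These spacetimes are CSI, so in particular the Ricci scalar $R$ is constant (I would check this at the outset, since already $E_{ab}=R\,g_{ab}$ must be of the form \eqref{TN} with constant $\lambda$). Writing $R_{abcd}=R^{(0)}_{abcd}+\tilde R_{abcd}$, where $R^{(0)}_{abcd}\propto R\,(g_{ac}g_{bd}-g_{ad}g_{bc})$ is the constant-curvature part (boost order $0$, i.e.\ its nonzero frame components have boost weight $0$), the remainder $\tilde R_{abcd}$ collects the Weyl tensor (boost order $\le -1$ for type III, $\le -2$ for type N) and the traceless Ricci contribution (boost order $-2$, since it is type N); hence $\tilde R_{abcd}$ has boost order $\le -1$.

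The second ingredient is a covariant-derivative lemma tailored to the Kundt geometry: because $\bl$ is geodesic, non-expanding, shear-free and twist-free, the transverse optical coefficients $L_{ij}$ vanish, and with them the boost-weight-raising rotation coefficients, so $\nabla$ does not increase the boost order of an aligned tensor. Together with $\nabla R^{(0)}=0$ (as $R$ is constant) this gives that every $\nabla^{(k)}R$ with $k\ge 1$ equals $\nabla^{(k)}\tilde R$ and has boost order $\le -1$. I would then treat $E_{ab}$ as a finite sum of monomials in the factors $g_{ab}$, $R^{(0)}$ and the tensors $\nabla^{(k)}\tilde R$ ($k\ge 0$), fully contracted down to the two free symmetric indices. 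Since the boost weight of a frame component is additive under a contracted product, the boost order of each monomial is bounded by the sum of the boost orders of its factors.

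Now I would read off \eqref{TN} component by component. A monomial built solely from $g_{ab}$ and $R^{(0)}$ is a constant-coefficient contraction of metrics, hence a constant multiple of $g_{ab}$; summing these furnishes the boost-weight-$0$ pure-trace term $\lambda g_{ab}$ with $\lambda$ constant. Every other monomial contains at least one factor of boost order $\le -1$ and therefore has boost order $\le -1$; consequently all positive-boost-weight components of $E_{ab}$ vanish, i.e.\ $E_{ab}\ell^a\ell^b=0$ and $E_{ab}\ell^a m_i^b=0$. Moreover any symmetric rank-$2$ tensor of boost order $\le -2$ has only its $E_{ab}n^a n^b$ component nonzero and is thus automatically proportional to $\ell_a\ell_b$, so the boost-weight-$(-2)$ content is exactly the admissible $\phi\,\ell_a\ell_b$ term. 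The single remaining obstruction is the boost-weight-$(-1)$ component $E_{ab}n^a m_i^b$, which must be shown to vanish.

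This last step is the crux, and it is genuinely nontrivial only in the Weyl type III case: for Weyl type N all curvature factors have boost order $\le -2$, every monomial with at least one curvature factor then has boost order $\le -2$, and \eqref{TN} follows at once. For type III, a boost-order-$(-1)$ symmetric rank-$2$ monomial must be a constant multiple of a single factor $\nabla^{(k)}\tilde R$ self-contracted to two free indices (all other factors being metrics or $R^{(0)}$). For $k=0$ this vanishes because the Weyl tensor is traceless, so no nonzero symmetric rank-$2$ tensor arises from contracting a single Weyl tensor. For $k\ge 1$ I expect to dispose of the boost-weight-$(-1)$ part by combining the contracted and second Bianchi identities with the Kundt relations $L_{ij}=0$ and the constancy of the scalar invariants, which together force the transverse vector carrying this data to drop out of every admissible contraction. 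Proving this vanishing --- as opposed to merely the boost-order bound --- is where the real computation resides, and where the detailed Kundt geometry, rather than the purely algebraic alignment conditions, is indispensable.
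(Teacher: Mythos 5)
Your skeleton agrees with the paper's proof: boost-weight additivity disposes of every monomial at least quadratic in the non-constant curvature factors, the type N case is quick, and the whole difficulty sits in the b.w.\ $-1$ component of rank-2 tensors linear in a single $\nabla^{(k)}\BR$ for Weyl type III. But two of your steps have genuine gaps. The first is your covariant-derivative lemma, which is false as justified: vanishing of $\kappa_i$ and $\rho_{ij}$ does not prevent a covariant derivative from raising boost order, because the dangerous contributions are not boost-weight-raising rotation coefficients but $D$-derivatives of the leading components. Concretely, for $T_{ab}=\phi\,\ell_a\ell_b$ with $\bl$ an affinely parametrized Kundt direction, the b.w.\ $-1$ component $n^a n^b \ell^c \nabla_c T_{ab}=D\phi$ need not vanish, so an aligned tensor of boost order $-2$ can acquire boost order $-1$ upon differentiation. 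What saves the curvature of these particular spacetimes is balancedness, not Kundtness alone: the Bianchi identities in the Kundt frame force $D$ of the negative-b.w.\ Weyl components and $D\omega'$ to vanish, so $\BS$ is $1$-balanced, and in a \emph{degenerate} Kundt spacetime covariant differentiation preserves $k$-balancedness (lemma \ref{kbalancedderivative}); the paper instead directly invokes proposition A.8 of \cite{universalMaxwell} for the aligned type III property of all $\nabla^{(k)}\BR$, and the $1$-balancedness of $S_{ab}$ for the boost order $\le -2$ of all $\nabla^{(k)}\BS$. Your argument establishes none of these $D$-derivative conditions, and without them even your bookkeeping for the quadratic terms does not get off the ground. (You also tacitly assume the Ricci and Weyl aligned null directions coincide; the paper cites proposition 3.1 of \cite{KucPra16} for this.)

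The second gap is the step you yourself flag as the crux: the vanishing of the b.w.\ $-1$ part of rank-2 contractions of a single $\nabla^{(k)}\BR$, $k\ge 1$. Saying you ``expect to dispose of it'' by combining Bianchi identities with $L_{ij}=0$ and CSI is a gesture, not an argument --- and it is precisely where the paper's actual proof lives. The paper proceeds by induction on even $k$: in the base case $k=2$, reordering derivatives via the Ricci identity produces commutator remainders $\BR*\BR$ that contribute only b.w.\ $-2$ to contractions, after which the second Bianchi identity $R_{ab[cd;e]}=0$ (whose Ricci traces have boost order $\le-2$ by $1$-balancedness of $S_{ab}$) shows every rank-2 tensor linear in $\nabla^{(2)}\BR$ is zero or type N; in the induction step, the commutator $\nabla_J[\nabla,\nabla]\nabla_I\BR$ is expanded by Leibniz into terms $\nabla_K\BR * \nabla_{J\setminus K}\nabla_I\BR$, and a boost-weight count shows each remainder either has boost order $\le-2$ or reduces to $\BR*\nabla^{(n)}\BR$, covered by the induction hypothesis. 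Without this (or an equivalent) mechanism, your proposal proves only the easy boost-order bound $\le -1$, not the proposition.
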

We also show that certain classes of Weyl type II Kundt  spacetimes are TN.  

To study sufficient conditions  for TNS spacetimes,   we first generalize previous results of \cite{HerPraPra14,HerPraPra17} for type III universal spacetimes

\begin{proposition}[Necessary and sufficient conditions for Weyl type III universal Kundt spacetimes]\label{prop_U_III}
\tb{A Weyl type III  Kundt spacetime is universal if and only if 
it is Einstein and
\be
F_{0} \equiv {C^a}_{cde}C^{bcde}=0, \qquad  F_2\equiv {C^{pqrs}}_{;a}{C_{pqrs;b}}=0 .
\ee
}
\end{proposition}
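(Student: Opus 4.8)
The plan is to work from the intrinsic characterization of universality: a spacetime is universal if and only if every symmetric rank-$2$ tensor built polynomially from the metric, the Riemann tensor and its covariant derivatives is a constant multiple of $g_{ab}$, i.e.\ $\phi\equiv 0$ in \eqref{TN}. Both directions then amount to controlling the $\ell_a\ell_b$ part of such tensors. For necessity I would first note that the traceless Ricci tensor is itself a rank-$2$ curvature tensor; universality forces it to be proportional to $g_{ab}$ and hence to vanish, so the spacetime is Einstein.

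Applying the characterization to $F_0$ and $F_2$ (with indices lowered) gives $F_0=c_0\,g$ and $F_2=c_2\,g$ for constants $c_0,c_2$, which I would fix by tracing. The traces $\operatorname{tr}F_0=C_{abcd}C^{abcd}$ and $\operatorname{tr}F_2=C_{pqrs;a}C^{pqrs;a}$ are \emph{purely Weyl} polynomial scalar invariants. For a type III Einstein Kundt spacetime---which is a degenerate Kundt spacetime, so that covariant differentiation does not raise the boost order of the Weyl tensor above $-1$---every such invariant vanishes: a full contraction forced to boost weight $0$ would have to pair a surviving negative-boost-weight component against a positive-boost-weight one, and the latter are absent in type III. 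Hence $c_0=c_2=0$ and $F_0=F_2=0$.

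For sufficiency, assume the spacetime is Einstein, type III and Kundt with $F_0=F_2=0$. Einstein means the traceless Ricci tensor vanishes, trivially of the form \eqref{TN}, so Proposition~\ref{U} applies and the spacetime is TN: every rank-$2$ curvature tensor reads $E_{ab}=\lambda g_{ab}+\phi\,\ell_a\ell_b$ with $\lambda$ constant. Universality is therefore equivalent to $\phi=E_{ab}n^an^b=0$ for \emph{every} such $E$. Because $E$ is already of TN form, its only nontrivial datum is this boost-weight $-2$ scalar, and the hypotheses $F_0=0$, $F_2=0$ are nothing but $\phi_{F_0}=\phi_{F_2}=0$.

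It remains to show that the $\phi$-parts of all rank-$2$ curvature tensors are exhausted by those of $F_0$ and $F_2$. I would organise this by boost weight and derivative order in the FKWC basis, cf.\ \eqref{FKWC-C9}. Constancy of all scalar invariants (the CSI property, automatic here) annihilates every term in which a derivative hits a scalar, leaving contractions of Weyl factors and their derivatives. For type III, the vanishing of all Weyl polynomial invariants (nilpotency of Weyl contractions) excludes cubic-and-higher Weyl monomials from the boost-weight $-2$ sector, so the zeroth-derivative contribution is quadratic and equals $\phi_{F_0}$, while the leading derivative contribution is $\phi_{F_2}$. The remaining monomials carry additional boxes; using the commutator identities together with the Einstein, Kundt and type III conditions I would reduce each $\Box$-descendant to a constant-coefficient combination of $\phi_{F_0}$ and $\phi_{F_2}$, modulo terms killed by the Bianchi identities or by tracelessness. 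The hard part is precisely this closure: proving that the infinite family of boost-weight $-2$ scalars produced by arbitrarily many derivatives collapses onto the span of $\phi_{F_0}$ and $\phi_{F_2}$. This is where the Kundt structure is indispensable---it guarantees both that differentiation never raises the boost order and that the relevant commutators generate only lower-order, already-controlled tensors---and it is the step at which the argument genuinely extends the type III universality results of \cite{HerPraPra14,HerPraPra17}.
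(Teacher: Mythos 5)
Your necessity argument is essentially sound and close in spirit to the literature, but note a definitional caveat: universality in \cite{Coleyetal08,HerPraPra14,HerPraPra17} quantifies only over \emph{conserved} symmetric rank-2 tensors (those arising as variations of invariant actions), not over all rank-2 curvature tensors, so you may not directly write $F_0=c_0\,g$ and $F_2=c_2\,g$ --- neither tensor is obviously conserved. The repair is standard and cheap: Einstein follows from applying the definition to the Einstein tensor, and $F_0$, $F_2$ sit inside genuinely conserved variational tensors --- e.g.\ on an Einstein type III Kundt background the variation of $R_{abcd}R^{abcd}$ reduces (set $S_{ab}=0$ in the appendix \ref{FKWC} list) to $-\tfrac{d-4}{d^2(d-1)}R^2 g_{ab}+2C\indices{_a^{cde}}C_{bcde}$, forcing $F_0\propto g_{ab}$, while the necessity of $F_2=0$ is exactly the necessity half of Proposition 1.7 of \cite{HerPraPra17}; combined with your correct boost-weight observation that $F_0,F_2\propto\ell_a\ell_b$ for type III Kundt, both must vanish. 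So this half is fixable in a line or two.

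The genuine gap is in sufficiency, and you have flagged it yourself: the ``closure'' step --- showing that the boost-weight $-2$ parts of \emph{all} rank-2 curvature tensors are controlled by $F_0$ and $F_2$ --- is precisely the content of the proposition, and your proposal only announces a plan (``I would reduce each $\Box$-descendant\dots'') rather than executing it. Moreover, the formulation is not quite what turns out to be true. The paper's route (generalizing section 5.2 of \cite{HerPraPra17} from four to arbitrary dimensions, where $F_0=0$ must be added as a hypothesis since it is automatic only in $d=4$) splits the problem in two: (i) every rank-2 tensor at least quadratic in $\nabla^{(k)}\BC$ \emph{vanishes identically} once $F_0=F_2=0$ --- a boost-weight count leaves only products of the b.w.\ $-1$ parts of two factors, and a nontrivial induction over derivative structures shows the hypotheses kill exactly these obstructions (cf.\ the explicit $\tau_i$-dependence in \eqref{F2}); and (ii) terms \emph{linear} in $\nabla^{(k)}\BC$ vanish separately via the Bianchi identity, which for Einstein metrics gives $\nabla^b C_{abcd}=0$, together with the Ricci identity and induction on $k$. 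In particular the derivative descendants do not ``collapse onto the span of $\phi_{F_0}$ and $\phi_{F_2}$''; they are shown to be zero outright, and the linear sector never meets $F_0$, $F_2$ at all. Without carrying out (i) --- or at least invoking the \cite{HerPraPra17} machinery and verifying that adding $F_0=0$ closes the higher-dimensional induction --- the proof is incomplete at its central step.
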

For type III TNS spacetimes we then obtain
\begin{proposition}[Sufficient conditions for Weyl type III TNS spacetimes]\label{prop_TNS}
	Kundt spacetimes of Weyl type III  and traceless Ricci type N obeying $F_0 =0 = F_2$ are TNS.
\end{proposition}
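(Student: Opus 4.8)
The plan is to build on Proposition \ref{U}. Since the spacetime is a Kundt metric of Weyl type III with traceless Ricci of type N, it is already TN, so any symmetric rank-2 tensor $E_{ab}$ constructed polynomially from the metric, the Riemann tensor and its covariant derivatives satisfies $E_{ab}=\lambda g_{ab}+\phi\,\ell_a\ell_b$. What remains is to prove the stronger, TNS, statement that the null part $\phi\,\ell_a\ell_b$ is always a linear combination $\sum_n a_n\Box^n S_{ab}$. Equivalently, I must show that neither the purely Weyl-built contributions nor the mixed Weyl--$S$ contributions feed into $\phi\,\ell_a\ell_b$, so that only the terms linear in the traceless Ricci tensor survive, and that these collapse to powers of $\Box$ acting on $S_{ab}$.

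The first step is to expand $E_{ab}$ in the FKWC basis of rank-2 curvature monomials [eq. \eqref{FKWC-C9}] and to sort the monomials by the number of explicit traceless-Ricci factors they carry, $E_{ab}=E^{(0)}_{ab}+E^{(1)}_{ab}+E^{(\geq 2)}_{ab}$. The quadratic-and-higher sector is the easiest: writing $S_{ab}=\sigma\,\ell_a\ell_b$, every contraction between two $S$ factors produces a factor $\ell_a\ell^a=0$, and since in a Kundt geometry $\bl$ is geodesic, non-expanding, non-shearing and non-twisting, the same conclusion survives once the covariant derivatives present in such monomials are accounted for; hence $E^{(\geq 2)}_{ab}=0$. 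For the purely Weyl sector $E^{(0)}_{ab}$ I would invoke the reduction already used to establish Proposition \ref{prop_U_III}: the hypotheses $F_0=0=F_2$ together with the type III Kundt structure force every such monomial to reduce to a multiple of $g_{ab}$, so $E^{(0)}_{ab}$ contributes only to $\lambda g_{ab}$. This is precisely the statement that no Weyl terms appear in \eqref{TNS}.

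The linear sector $E^{(1)}_{ab}$ is where the argument must do its real work. The central tool is the Kundt property that covariant differentiation does not raise the boost weight: applied to $S_{ab}$, it guarantees that every covariant derivative, and in particular $\Box^m S_{ab}$, remains of traceless Ricci type N, i.e. proportional to $\ell_a\ell_b$. Using this, any monomial in $E^{(1)}_{ab}$ containing at least one Weyl factor can be shown to vanish: the leading cross term gives $C_{acbd}S^{cd}\propto C_{acbd}\ell^c\ell^d$, which picks out the boost-weight $+2$ components of the Weyl tensor and therefore vanishes for type III, and the same boost-weight bookkeeping — each surviving descendant of $S$ being type N while the type III Weyl carries boost weight at most $-1$ — eliminates all higher Weyl--$S$ cross terms. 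What is left are monomials built solely from $S_{ab}$, its covariant derivatives and the curvature invariants, which are constant for these spacetimes; commuting derivatives through the Kundt connection produces only Weyl terms, already shown to drop, so every admissible type-N rank-2 structure collapses to $\Box^m S_{ab}$, yielding $\phi\,\ell_a\ell_b=\sum_n a_n\Box^n S_{ab}$.

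The main obstacle, and the step I would spend most care on, is this final collapse together with the vanishing of the higher Weyl--$S$ cross terms. One must verify that all inequivalent second-derivative structures acting on $S_{ab}$ — for instance $\nabla_a\nabla^c S_{cb}$, the trace-type combinations, and the various orderings generated when derivatives are commuted — reduce to $\Box S_{ab}$ modulo terms that vanish by the Kundt and type-N conditions, and that the boost-weight argument genuinely forbids any Weyl factor from surviving a contraction into a boost-weight $-2$ output. This requires systematic use of the Ricci and Bianchi commutation identities together with the divergence and trace properties of a type-N $S_{ab}$ in a Kundt background, which is the technically demanding part of the proof.
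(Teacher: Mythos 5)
There is a genuine gap in your treatment of the purely Weyl sector $E^{(0)}_{ab}$. The reduction behind proposition \ref{prop_U_III} is valid for \emph{Einstein} spacetimes, where the contracted Bianchi identity gives $\nabla^b C_{abcd}=0$, so every rank-2 contraction linear in $\nabla^{(k)}\BC$ vanishes. Here the spacetime is not Einstein: by \eqref{BianchiC}, $\nabla^b C_{abcd}=\frac{d-3}{d-2}(\nabla_d S_{ac}-\nabla_c S_{ad})\neq 0$. Concretely, $\nabla^d\nabla^b C_{abcd}=\frac{d-3}{d-2}\Box S_{ac}-\frac{d-3}{(d-2)(d-1)}RS_{ac}$ (eq. \eqref{Weylcontraction}) carries zero explicit Ricci factors, so in your bookkeeping it sits in $E^{(0)}_{ab}$, yet it is proportional to $\ell_a\ell_b$ rather than $g_{ab}$. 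Hence your claim that $E^{(0)}_{ab}$ contributes only to $\lambda g_{ab}$ is false, and as structured your argument has no mechanism generating the $\sum_n a_n\Box^n S_{ab}$ terms from the Weyl sector --- the conclusion survives only because the terms you misplace happen to be of the TNS form. In the paper's proof this linear-in-$\nabla^{(k)}\BC$ sector is exactly where the real work is done: \eqref{BianchiC} together with a separate induction in the derivative order shows that every rank-2 tensor linear in the Weyl tensor reduces, modulo commutator terms of lower order handled by the induction hypothesis, to expressions of the form \eqref{TN}, \eqref{TNS}. The hypothesis $F_0=0=F_2$ is used only to kill the sector at least quadratic in $\nabla^{(k)}\BC$ (the generalization of the argument of \cite{HerPraPra17}, which is unaffected by the b.w.\ $-2$ Ricci terms); it does not render the linear Weyl sector pure trace.

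Two smaller points. First, $C_{acbd}S^{cd}=\omega' C_{acbd}\ell^c\ell^d$ involves Weyl components of b.w.\ $\geq 0$, not only $+2$ (its $n^a n^b$ component is a b.w.\ $0$, $\Phi$-type component); it vanishes for type III, but your stated reason would wrongly suggest it vanishes for type II as well, where in fact $S^{cd}C_{acbd}\neq 0$ in general (cf. \eqref{KGT-SC}). Second, your mixed and quadratic-in-$S$ sectors are correct in substance, but the clean argument is boost-weight counting via balancedness, as in the paper: $\nabla^{(k)}\BS$ is 1-balanced and hence of boost order $\leq -2$, while $\nabla^{(l)}\BC$ is of boost order $\leq -1$, so any product of two such factors has boost order $\leq -3$ and cannot contribute to a rank-2 tensor, whose components have b.w.\ $\geq -2$. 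Your sketch of the collapse of the pure-$S$ sector to $\Box^n S_{ab}$ is essentially the paper's induction (commutator \eqref{commutator} plus the contracted Bianchi identity, organized by the position of the free indices), so that part is sound.
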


Furthermore, in section \ref{sec_sufficient} we also study the Kerr--Schild transformation of  universal Kundt background spacetimes and show that under appropriate additional conditions,  the resulting spacetime is TN. 

Section \ref{sec_app} illustrates how 
\tb{vacuum field equations simplify for TN spacetimes of Weyl types II, III, and N for }
 specific higher-order gravities, such as quadratic and six-dimensional conformal gravities.

In the Appendix \ref{app_TIII}, we extend some of the results obtained in sections \ref{sec_Nec} and \ref{sec_sufficient} for TN spacetimes to a more general class of T-III spacetimes for which also most of the vacuum field equations of any theory of the form \eqref{lagr} are identically satisfied. Thus this class of spacetimes may  be also useful for constructing vacuum solutions of generalized gravities.
Appendix \ref{sec_KS} contains technical results on the Kerr--Schild transformation of the Einstein Kundt spacetimes employed in section \ref{subsec_KS}. 
In  Appendix \ref{FKWC}, we present the metric variations of all \tb{independent  curvature invariants in the 
Fulling--King--Wybourne--Cummins (FKWC) basis  \cite{FKWC}} up to order 6 for TN spacetimes of Weyl type III to provide examples of possible field equations for these spacetimes.
\tb{Finally, in Appendix \ref{sec_examples},
we extend section \ref{sec_app} by
providing explicit Weyl type III solutions to quadratic gravity constructed by the Kerr--Schild transformation of Weyl type III universal spacetimes.}

\section{Preliminaries}

In this paper, we employ the algebraic classification of tensors \cite{Coleyetal04}  (see \cite{OrtPraPra13rev} for review) and the \tb{higher-dimensional version of the Geroch--Held--Penrose} (GHP) formalism \cite{Durkeeetal10}.

We  work in a null frame in $d$ dimensions  consisting of null vectors $\bl$ and $\bn$ and $d-2$ spacelike vectors $\mbox{\boldmath{$m^{(i)}$}} $ obeying
\be
\ell^a \ell_a= n^a n_a = 0, \qquad   \ell^a n_a = 1, \qquad \ m^{(i)a}m^{(j)}_a=\delta_{ij},  \label{ortbasis}
\ee
where $a,b \in \{0,\dots,\ d-1\}$ and $i,j \in \{2,\dots,\ d-1\}$. Covariant derivatives along the frame vectors are denoted as
\be
D \equiv \ell^a \nabla_a, \qquad \T\equiv n^a \nabla_a, \qquad \delta_i \equiv m^{(i)a} \nabla_a .
\ee

\tb{For the frame vector $\bl$, we define
	Ricci rotation coefficients
	\BEA
	\kappa_i&=&\ell_{a;b}m_{(i)}^a\ell^b,\label{kappa}\\
	\tau_i&=&\ell_{a;b}m_{(i)}^a n^b,\label{tau}\\
\rho_{ij}&=&\ell_{a;b}m_{(i)}^a m_{(j)}^b,
\label{rho}
\EEA
where $\rho_{ij}$ is the so called optical matrix. The vector $\bl$ is geodetic iff $\kappa_i=0$.
Kundt spacetimes are defined as spacetimes for which there exists a geodetic null vector with vanishing optical matrix $\rho_{ij}$.}

A quantity $q$ has a {boost weight} (b.w.) ${\rm b}$ if it transforms according to
\be
\hat q = \lambda^{\rm b} q  
\ee 
 under boosts 
\be
\hbl = \lambda \bl, \qquad    \hbn = \lambda^{-1} \bn, \qquad  {\mbox{\boldmath{$\hat m^{(i)}$}}} = \mbox{\boldmath{$m^{(i)}$}}.  \label{boost}
\ee

Components of tensors in the  null frame  have distinct integer b.ws. Boost order of a tensor is the maximum b.w.\ of its non-vanishing components.
In an adapted null frame for algebraically special tensors, the highest b.w.\ tensor components vanish. In particular, the traceless Ricci tensor $S_{ab}$ in general admits components of b.w.\ $-2$,\dots,\ 2 while for  $S_{ab}$ of type N, only the b.w.\ $-2$ component $\omega'$ is non-trivial
\tb{\be
S_{ab}=\omega'\ell_a\ell_b.\label{Sab}
\ee}
 Similarly, in general the Weyl tensor admits components of b.w.\ $-2$,\dots,\ 2 while for a Weyl type II tensor,
only components of b.w.\ 0 ($\Phi$, $\WDA{ij} $, $\Phi_{ij}$, $\Phi_{ijkl}$), $-1$ ($\Psi'_i$, $\Psi'_{ijk}$) and $-2$ ($\Omega'_{ij}$) can be non-vanishing.
For Weyl type III/N, components of b.w.\ ($-1$, $-2$)/($-2$) can be non-vanishing, respectively (see \cite{OrtPraPra13rev}).    

Let us go back to the traceless Ricci tensor $S_{ab}$ of type N. While (in the adapted frame) $S_{ab}$ admits only the b.w.\ $-2$ component, in general its covariant derivatives possess also non-trivial components of b.w.\ $>-2$. However, for certain Kundt spacetimes, it can be shown that b.w.\ $>-2$ components vanish also for arbitrary covariant derivatives of  $S_{ab}$, $\nabla^{(I)} \BS$, i.e., boost order does not change with a covariant differentiation. Similarly, in these spacetimes,   covariant derivatives of the Weyl tensor do not increase boost order.   This is the key step for proving universality or the TN property. If one knows that e.g. $\nabla^{(I)} \BC $ admits only b.w.\ $\leq -2$ components then it immediately follows that all rank-2 tensors quadratic in  $\nabla^{(I)} \BC $ or of higher order vanish identically (rank-2 tensors admit components of b.w.\  $\geq -2$ only). This subsequently  leads to a considerable simplification of the analysis of non-trivial rank-2 tensors constructed from the curvature.

For proving that 
boost order of a tensor in certain Kundt spacetimes does not increase with a covariant differentiation (see section \ref{sec_sufficient}), we need to define the notion of balanced  tensors introduced in \cite{Pravdaetal02} 
and \cite{HerPraPra14}.

\begin{definition}[$k$-balanced tensors]
  In a frame parallelly  propagated along a null geodetic affinely parameterized vector field $\bl$, a tensor $\BT$ is said to be $k$-balanced, $k \in \mathbb{N}_{0}$,  if its boost weight $b$ part\footnote{i.e., $\BT_{(b)}$ is obtained by setting all components of $\BT$ with b.w.\ not equal to $b$ to zero.} $\BT_{(b)}$ satisfies $\BT_{(b)} = 0$ for $b \geq -k$ and 
$D^{-b-k} \BT_{(b)} = 0$ for $b<-k$. 
If $\BT$ is $0$-balanced, we say it is balanced.  
\end{definition}

A straightforward extension of Lemma A.7 of \cite{universalMaxwell} 
 to arbitrary $k$ then reads
\begin{lemma}\label{kbalancedderivative}
In a degenerate Kundt spacetime, a covariant derivative of a $k$-balanced tensor is again a $k$-balanced tensor. 
\end{lemma}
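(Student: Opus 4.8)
The plan is to fix a frame parallelly propagated along the affinely parameterized geodetic vector $\bl$ and to track the boost weight of every term produced by a single covariant derivative. Writing $\nabla_a = n_a D + \ell_a \T + m^{(i)}_a \delta_i$ through the completeness relation, I would first record the boost-weight shifts: acting on the boost weight $b$ part $\BT_{(b)}$, the operator $D$ raises $b$ by one, $\T$ lowers it by one, and $\delta_i$ preserves it. In a parallel frame $D$ annihilates every frame vector, so the $D$-piece carries no Ricci rotation corrections; and since the boost weight of a frame connection coefficient $\Gamma^{(\sigma)}_{(\mu)(\nu)}$ equals $s_{(\mu)} + s_{(\nu)} - s_{(\sigma)}$ (with $s = +1,-1,0$ for the $\bl,\bn,m^{(i)}$ projections), one checks that in a Kundt spacetime — where $\kappa_i = 0$ and $\rho_{ij} = 0$ — every connection coefficient of positive boost weight vanishes. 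These facts assemble into the structural identity
\[ (\nabla \BT)_{(b')} = D\,\BT_{(b'-1)} + Q_{(b')} , \]
where $Q_{(b')}$ collects the terms $\T\BT_{(b'+1)}$, $\delta_i\BT_{(b')}$ and $\Gamma\cdot\BT_{(b)}$ with $\Gamma$ of boost weight $w \le 0$ and $b = b' - w \ge b'$; in particular $\BT_{(b'-1)}$ enters only through the correction-free term $D\,\BT_{(b'-1)}$.

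With this identity the $k$-balanced property of $\nabla\BT$ splits into a trivial part and a genuine one. For the vanishing of the non-negative sector, when $b' > -k$ every contributing $\BT_{(b)}$ has $b \ge b'-1 \ge -k$ and hence vanishes, while for $b' = -k$ the only surviving term is $D\,\BT_{(-k-1)}$, which the balanced condition kills since $D^{\,-(-k-1)-k}\BT_{(-k-1)} = D\,\BT_{(-k-1)} = 0$; thus $(\nabla\BT)_{(b')} = 0$ for all $b' \ge -k$. For the decay condition with $b' < -k$, the $D$-piece is immediate, $D^{-b'-k} D\,\BT_{(b'-1)} = D^{\,-(b'-1)-k}\BT_{(b'-1)} = 0$, because $b'-1 < -k$.

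The real content is to show $D^{-b'-k} Q_{(b')} = 0$. Since $D$ does not commute with $\T$, $\delta_i$ or the connection coefficients, I would push the $D$'s inward using the Kundt-reduced GHP commutators, which (all positive-boost-weight coefficients having been removed) take the schematic form $[D,\delta_i]\eta = c_i\,D\eta$ and $[D,\T]\eta = (\text{b.w. } 0)\,\delta_i\eta + (\text{b.w. } 0)\,\eta + (\text{b.w. } -1)\,D\eta$, with coefficients built from $\tau_i$ and the spin connection. The essential degenerate-Kundt input enters here: because the Riemann tensor and all its covariant derivatives have boost order $\le 0$, the Ricci identities force every connection coefficient of boost weight $w \le 0$ to be balanced along $\bl$, i.e. $D^{\,1-w}\Gamma = 0$ (for instance $D\tau_i = 0$, since $D\tau_i$ would be a boost weight $+1$ curvature/connection expression, which vanishes). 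Granting this, a Leibniz expansion of $D^{-b'-k}$ over each term $\Gamma\cdot\BT_{(b)}$ of $Q_{(b')}$ reads $\sum_j \binom{-b'-k}{j}(D^j\Gamma)(D^{-b'-k-j}\BT_{(b)})$, and for every $j$ either $D^j\Gamma = 0$ (when $j \ge 1-w$) or $D^{-b'-k-j}\BT_{(b)} = 0$ (when $j \le -w$, using $b = b'-w$ and $k$-balancedness); the two ranges exhaust all $j$, so each term vanishes. The $\T$- and $\delta_i$-terms are treated identically, every contribution reducing to a balanced coefficient times $D^{m}\BT_{(b)}$ with $m \ge -b-k$.

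I expect the main obstacle to be precisely the verification that the surviving connection coefficients inherit the balanced decay $D^{\,1-w}\Gamma = 0$ from the degenerate-Kundt (boost-order-nonincreasing) condition through the Ricci and commutator identities; once this is in hand the remaining boost-weight bookkeeping and binomial counting are routine. Finally, since the entire argument uses only that the balanced threshold $-b-k$ shifts uniformly with $b$, it is identical to the $k=0$ analysis of Lemma~A.7 of \cite{universalMaxwell} with all thresholds raised by $k$, which is why the extension to arbitrary $k$ is straightforward.
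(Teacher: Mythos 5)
Your argument is correct and takes essentially the same approach as the paper: the paper proves this lemma simply by declaring it a straightforward extension of Lemma A.7 of \cite{universalMaxwell} to arbitrary $k$, and your frame decomposition, the balancedness of the surviving connection coefficients ($D^{1-w}\Gamma=0$ in a degenerate Kundt spacetime), and the binomial/commutator bookkeeping are precisely the content of that cited $k=0$ proof with all thresholds shifted by $k$. Your closing observation --- that the threshold $-b-k$ shifts uniformly with $b$, so the $k=0$ analysis carries over verbatim --- is exactly the reason the paper treats the generalization as immediate.
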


\section{Necessary conditions for almost universal spacetimes}
\label{sec_Nec}

In this brief section on necessary conditions for TN spacetimes,
we give a proof of proposition \ref{nec_TN} for non-Einstein spacetimes 
and discuss possible extensions to Einstein spacetimes in various special cases.

First, let us prove the following:
\begin{lemma}\label{lem_CSI}
	TN  spacetimes are CSI.
\end{lemma}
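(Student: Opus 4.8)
The plan is to reduce the statement that all polynomial scalar curvature invariants are constant to a single observation: the trace of any tensor of the form \eqref{TN} is constant. First I would contract the defining relation \eqref{TN} with the inverse metric. Since $\bl$ is null, $\ell^a\ell_a=0$ by \eqref{ortbasis}, so
\be
g^{ab}E_{ab}=\lambda d+\phi\,\ell^a\ell_a=\lambda d ,
\ee
which is constant precisely because the TN definition requires $\lambda$ to be a \emph{constant}, not merely a function. Hence the trace of \emph{every} symmetric rank-2 tensor built polynomially from the metric, the Riemann tensor and its covariant derivatives of arbitrary order depends only on $d$ and the associated $\lambda$, and is therefore constant.

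Next I would show that every polynomial scalar curvature invariant $I$ arises as such a trace. By definition each such $I$ is a sum of complete metric-contractions of tensor products of $\BR$ and its covariant derivatives; that is, every nontrivial monomial has the schematic form $g^{a_1 b_1}\cdots g^{a_k b_k}\,\mathcal{T}_{a_1 b_1\cdots a_k b_k}$ with $k\ge 1$, where $\mathcal{T}$ is a curvature polynomial. Freeing one metric factor, say $g^{a_1 b_1}$, and symmetrizing over the two liberated indices produces a symmetric rank-2 tensor $E_{ab}$ that is again polynomial in the metric and curvature, and whose trace $g^{ab}E_{ab}$ reproduces the original monomial. The symmetrization is harmless because it is ultimately contracted against the symmetric $g^{ab}$, so $g^{a_1 b_1}\mathcal{T}_{a_1 b_1\cdots}=g^{a_1 b_1}\mathcal{T}_{(a_1 b_1)\cdots}$ in any case. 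Summing over monomials yields $I=g^{ab}E_{ab}$ for some admissible symmetric $E_{ab}$.

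Combining the two steps gives $I=g^{ab}E_{ab}=\lambda d$ for every polynomial scalar curvature invariant $I$, which is exactly the CSI condition, completing the proof.

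The only genuinely nontrivial point, and the step I would state most carefully, is the ``de-contraction'' of the second paragraph: one must check that every polynomial scalar curvature invariant can indeed be presented as the trace of a single symmetric rank-2 curvature tensor, rather than as a contraction pairing indices across several factors in a way that resists splitting into a rank-2 object. For the standard polynomial invariants (e.g.\ those in the FKWC basis \cite{FKWC}) this is immediate, since each is by construction a full metric-contraction of a curvature polynomial; it suffices to remark that selecting any one contracted index pair and freeing it always produces a valid $E_{ab}$ of the type appearing in \eqref{TN}.
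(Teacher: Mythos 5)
Your proof is correct and follows essentially the same route as the paper, which argues the contrapositive: a non-constant invariant $I$ is expressed as the trace of a rank-2 curvature tensor $E_{ab}$, whose trace would have to equal the constant $\lambda d$ if the spacetime were TN (since $\ell^a\ell_a=0$). Your explicit justification of the ``de-contraction'' step -- freeing one contracted index pair and symmetrizing, which is harmless under contraction with the symmetric $g^{ab}$ -- is a detail the paper leaves implicit, but the underlying argument is identical.
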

\begin{proof}
Let us assume that a spacetime possesses a non-constant curvature invariant $I$ constructed polynomially from the Riemann tensor and its covariant derivatives of arbitrary order.
$I$ can be expressed as a trace of a rank-2 tensor $E_{ab}$. Since
the trace of $E_{ab}$ is non-constant then $E_{(ab)}$ is not of the form
\eqref{TN} 
or more special and the spacetime is not TN.
 Thus TN$\subset$CSI.
\end{proof}
Now let us present a proof of proposition \ref{nec_TN}.

\vspace{1mm}
\noindent
{\it{\tb{Proof of proposition \ref{nec_TN}:}}}
			TN spacetimes are CSI thanks to lemma \ref{lem_CSI}.

		Geodeticity of $\bl$:
		         The contracted Bianchi identities $R^{ab}_{\  \ ;b}=0$ for the Ricci tensor of the form \eqref{TN} imply $ { \ell \indices{^a_{;b}} } \ell^b\propto \ell^a$ and thus $\bl$ is geodetic \tb{(i.e., $\kappa_i=0$ \eqref{kappa})}.

	$\bl$ is a  Kundt vector field:
		For the traceless part of the Ricci tensor, $S_{ab}$, of type N \tb{(i.e., of the form \eqref{Sab})}, the b.w.\ 0 component of $\Box S_{ab}$ reads 
		\be
		(\Box S_{ab}) \ell^a n^b = - \omega' \rho_{ij} \rho_{ij}\,,
		\ee
		\tb{where $\rho_{ij}$ is the optical matrix \eqref{rho}.}
		Tensor $\Box S_{ab}$ is  traceless and thus for {TN} spacetimes, all its b.w.\ 0 components have to vanish, which
        implies 
		\be
		\rho_{ij}=0
		\ee
		and therefore $\bl$ is a Kundt null congruence.

		TN spacetimes are of Weyl type II or more special:
		Now, taking into account the vanishing of $\kappa_i$ and  $\rho_{ij}$, the higher dimensional Newman--Penrose (NP) equations (NP1) and (NP3) of  \cite{Durkeeetal10} imply that the Weyl tensor components $\Omega_{ij}$ and $\Psi_{ijk}$ vanish and thus the Weyl tensor is of type II or more special.
\qed

Note that, from the proof of proposition \ref{nec_TN}, it follows that already TN$_2$ spacetimes\footnote{\tb{TN$_k$ spacetimes are defined similarly as TN spacetimes, however,  only derivatives of the Riemann tensor up to k$^{\rm th}$ order are considered.}} are necessarily Kundt and algebraically special. 

\begin{remark}[Necessary conditions for Einstein Weyl type N and III TN spacetimes]\label{rem_EinN}
\rm{So far, we have discussed {\it non-Einstein} TN  spacetimes.
It has been shown in lemma 4.8 of \cite{HerPraPra14} that Weyl type N CSI
Einstein spacetimes are Kundt.
Thus taking into account lemma  \ref{lem_CSI}, it follows that 
	Einstein {TN}  
	spacetimes of Weyl type N are necessarily CSI Kundt spacetimes.
	The same results hold also for Einstein TN spacetimes of Weyl type III
	under some additional genericity assumptions (see section 5.2 of \cite{HerPraPra14}).}
\end{remark}

\subsection{Necessary conditions in four and five dimensions}

In four dimensions, more general results can be obtained.

First, let us study necessary conditions for algebraically special TN  spacetimes.
In the four dimensional NP notation and in an appropriately chosen frame, the Weyl and Ricci tensors admit $\Psi_2$,  $\Psi_3$,  $\Psi_4$ and  $\Phi_{22}$ components, respectively.

For  {TN}$_0$, $\Psi_2$ is constant. The Bianchi equations (7.32a), (7.32b),
(7.32e) and (7.32h) from \cite{Stephanibook} then give
\be
\kappa=\sigma=\rho=\tau=0 \, ,
\ee
respectively, and thus

\begin{lemma}\label{lem_4D_TNnec}
	In four dimensions, genuine Weyl type II and D  {TN}$_0$  spacetimes are recurrent Kundt spacetimes.
\end{lemma}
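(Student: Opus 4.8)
The plan is to work in the four-dimensional NP formalism with a null tetrad $(\bl,\bn,m,\bar m)$ aligned so that $\bl$ points along the multiple principal null direction of the Weyl tensor. Algebraic speciality then gives $\Psi_0=\Psi_1=0$, while traceless Ricci type N leaves $\Phi_{22}$ as the only nonvanishing Ricci component and the scalar curvature constant. For a genuine type II or D spacetime one has $\Psi_2\neq0$, and, as recorded above, the TN$_0$ hypothesis makes $\Psi_2$ constant, so that $D\Psi_2=\T\Psi_2=\delta\Psi_2=\bar\delta\Psi_2=0$.

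First I would read off the optical scalars from the NP Bianchi identities, peeling them off in order of decreasing boost weight. The top-weight identity (7.32a) has its derivative part built from $D\Psi_1$ and $\bar\delta\Psi_0$ and its Weyl/Ricci source from $\kappa\Psi_2$ and $\Phi_{00},\Phi_{01}$; all but the first vanish, leaving $\kappa\Psi_2=0$. Likewise (7.32b), whose derivative part is $\T\Psi_0-\delta\Psi_1$, collapses to $\sigma\Psi_2=0$. Both of these use only $\Psi_0=\Psi_1=0$ and Ricci type N. The constancy of $\Psi_2$ enters in the other two: (7.32e) also carries $D\Psi_2$ next to $\rho\Psi_2$ and $\kappa\Psi_3$, so with $D\Psi_2=0$ and the already-secured $\kappa=0$ it reduces to $\rho\Psi_2=0$; and (7.32h) carries $\delta\Psi_2$ together with source terms $\tau\Psi_2,\rho\Psi_3,\sigma\Psi_3,\kappa\Psi_4$, reducing to $\tau\Psi_2=0$ after inserting $\delta\Psi_2=0$ and $\kappa=\sigma=\rho=0$. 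Dividing throughout by $\Psi_2\neq0$ yields $\kappa=\sigma=\rho=\tau=0$.

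It then remains to interpret these four conditions geometrically. The vanishing of $\kappa,\rho,\sigma$ says exactly that $\bl$ is a shear-free, non-expanding, non-twisting null geodesic, i.e. that the metric is Kundt. To upgrade this to recurrence I would note that the contractions of the covector $\ell_{a;b}m^a$ with $\ell^b,n^b,m^b,\bar m^b$ are precisely $\kappa,\tau,\sigma,\rho$; adjoining $\tau=0$ makes them all vanish, so $\ell_{a;b}m^a=0$, and by conjugation $\ell_{a;b}\bar m^a=0$ as well. Together with the identity $\ell_{a;b}\ell^a=0$ this forces $\ell_{a;b}=w_b\,\ell_a$ with $w_b=\ell_{c;b}n^c$, i.e. $\bl$ is recurrent, and the spacetime is recurrent Kundt. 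Type D needs no separate treatment: it is the subcase of type II in which the tetrad can be doubly aligned, and the argument uses only $\Psi_0=\Psi_1=0$ together with a constant $\Psi_2\neq0$.

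I expect the only real work to be the boost-weight bookkeeping in the second step: one must verify that in each of the four selected identities every term other than the single $(\text{optical scalar})\times\Psi_2$ genuinely drops out. This is what dictates the order of extraction — $\kappa$ and $\sigma$ first (from speciality alone), then $\rho$ and finally $\tau$ — with the constancy of $\Psi_2$ supplied by TN$_0$ removing the would-be obstructions $D\Psi_2$ and $\delta\Psi_2$, and the previously secured $\kappa=\sigma=\rho=0$ killing the mixed terms $\kappa\Psi_3,\rho\Psi_3,\sigma\Psi_3,\kappa\Psi_4$. A minor point worth confirming is that the surviving Ricci component $\Phi_{22}$, being of boost weight $-2$, cannot re-enter any of these identities of boost weight $\geq 0$.
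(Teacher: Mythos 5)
Your proof is correct and takes essentially the same route as the paper's: constancy of $\Psi_2$ (supplied by TN$_0$) inserted into the NP Bianchi identities (7.32a), (7.32b), (7.32e), (7.32h) of Stephani et al., with only $\Psi_2,\Psi_3,\Psi_4,\Phi_{22}$ nonvanishing, yields $\kappa=\sigma=\rho=\tau=0$ in exactly that order, hence a recurrent Kundt spacetime. The paper states this very tersely; your boost-weight bookkeeping and the explicit recurrence step $\ell_{a;b}=w_b\,\ell_a$ are just the details it leaves implicit.
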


Combining the results of proposition \ref{nec_TN}, remark \ref{rem_EinN}
(in four dimensions, the genericity assumptions for
type III always hold)
and lemma \ref{lem_4D_TNnec}, we arrive at a result that applies to both Einstein and non-Einstein spacetimes
\begin{proposition}[Necessary conditions for TN spacetimes in 4d]

In four dimensions, algebraically special TN spacetimes are necessarily Kundt.	
	
\end{proposition}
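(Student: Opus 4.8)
The plan is to assemble this from the three prior results it cites, since the proposition is explicitly stated to be a combination of Proposition~\ref{nec_TN}, Remark~\ref{rem_EinN}, and Lemma~\ref{lem_4D_TNnec}. The key observation is that the target claim covers \emph{all} TN spacetimes (both Einstein and non-Einstein), whereas the individual prior results each cover only part of the landscape, split according to whether the spacetime is Einstein and according to its Weyl type. So the proof should be organized as a case analysis on these two attributes, showing that every case is already handled by an established result once one passes to four dimensions.

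First I would fix a four-dimensional algebraically special TN spacetime and recall that by Lemma~\ref{lem_CSI} it is CSI, which is the common hypothesis feeding all the cited statements. Then I would split on the Weyl type. For genuine Weyl type II (and D), Lemma~\ref{lem_4D_TNnec} applies directly: such TN$_0$ spacetimes are recurrent Kundt, hence in particular Kundt, and since TN $\subset$ TN$_0$ the conclusion transfers. For the non-Einstein case across all remaining algebraically special types, Proposition~\ref{nec_TN} already gives that the spacetime is a CSI Kundt spacetime of Weyl type II or more special, so Kundtness is immediate. The only gap left is the Einstein case of the strictly more special Weyl types, namely type III and type N.

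That residual gap is exactly what Remark~\ref{rem_EinN} closes, and this is the step I expect to require the most care in phrasing. For Einstein type N, the remark invokes Lemma~4.8 of \cite{HerPraPra14} to conclude Kundt; for Einstein type III, the remark notes that the analogous result holds only ``under some additional genericity assumptions.'' The crucial point—and the reason this proposition is stronger in four dimensions than in general $d$—is the parenthetical claim already recorded in the excerpt that \emph{in four dimensions the genericity assumptions for type III always hold}. I would therefore explicitly cite that fact so that the type III Einstein case is unconditionally covered in $d=4$, removing the caveat that weakens Remark~\ref{rem_EinN} in arbitrary dimension.

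Having exhausted the cases—type II/D via Lemma~\ref{lem_4D_TNnec}, non-Einstein special types via Proposition~\ref{nec_TN}, and Einstein type III and N via Remark~\ref{rem_EinN} together with the four-dimensional genericity—every algebraically special TN spacetime in four dimensions is seen to be Kundt, which is the assertion. The proof is thus essentially a bookkeeping argument that verifies the three cited results jointly tile the full parameter space of $(\text{Einstein vs.\ not})\times(\text{Weyl type})$ once $d=4$ is imposed; the genuine mathematical content lives in the prior lemmas and in the dimension-specific collapse of the type III genericity hypothesis, so the main obstacle is simply ensuring that the case split is exhaustive and that no borderline algebraic type slips between the stated results.
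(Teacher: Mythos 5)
Your proposal is correct and follows essentially the same route as the paper, which proves the proposition precisely by combining Proposition~2 (non-Einstein TN spacetimes are CSI Kundt of type II or more special), Remark~1 for the Einstein type III and N cases (noting that in four dimensions the type III genericity assumptions always hold), and Lemma~3 for the genuine type II/D case. Your case analysis merely makes explicit the exhaustiveness check that the paper leaves implicit, so there is nothing substantively different to compare.
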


In contrast, in five dimensions,
	it can be shown using the same arguments as for the non-existence of genuine type II or D universal spacetimes in five dimensions (see section 4 of \cite{Herviketal15}) that
\begin{proposition}[Non-existence of 5d TN Weyl type II spacetimes]
	\label{prop_nonE5DTN}
	In five dimensions, genuine Weyl type II and D  {TN}$_0$  spacetimes do not exist.	
\end{proposition}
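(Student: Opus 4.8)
The plan is to argue by contradiction and to reduce the whole problem to boost weight $0$, where the TN$_0$ conditions turn out to coincide with those of an Einstein type II universal spacetime; the non-existence then follows from the same transverse-space analysis as in section~4 of \cite{Herviketal15}. So suppose a five-dimensional \emph{genuine} Weyl type II (or D) TN$_0$ spacetime existed, and work in a null frame adapted to the type II Weyl tensor, in which all its positive-boost-weight components vanish and only the b.w.\ $0$, $-1$ and $-2$ parts survive. A preliminary step, carried out exactly as in the universal case, is to verify that the distinguished null direction $\bl$ of \eqref{TN} is aligned with this Weyl structure; this is read off from the leading boost-weight behaviour of the simplest algebraic tensor ${C^a}_{cde}C^{bcde}$, whose prescribed form \eqref{TN} forbids any positive-boost-weight part.

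Next I would project the TN$_0$ conditions onto boost weight $0$. The two pieces that TN$_0$ permits but universality does not --- the null term $\phi\ell_a\ell_b$ in \eqref{TN} and, via the Ricci tensor, a non-Einstein traceless-Ricci contribution of type N as in \eqref{Sab} --- are both of boost weight $-2$ and hence drop out of the b.w.\ $0$ projection. Moreover, since for type II all curvature factors (the Weyl tensor, and the traceless Ricci which is forced to be of type N) carry only non-positive boost weight, the b.w.\ $0$ part of any symmetric rank-$2$ tensor built polynomially from the curvature equals the corresponding polynomial in the b.w.\ $0$ curvature alone, the Ricci entering only through its scalar part, which is proportional to $g_{ab}$. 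Thus the b.w.\ $0$ content of TN$_0$ is: the b.w.\ $0$ part of every such tensor must be proportional to $g_{ab}$, i.e.\ to $\delta_{ij}$ on the $(d-2)$-dimensional transverse block --- precisely the defining b.w.\ $0$ conditions of an Einstein type II universal spacetime.

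It remains to show that in five dimensions these b.w.\ $0$ conditions force the b.w.\ $0$ Weyl components to vanish. Here the transverse block is three-dimensional, and the b.w.\ $0$ Weyl components $\Phi$, $\WDA{ij}$, $\Phi_{ij}$, $\Phi_{ijkl}$ assemble into a curvature-type object on this $3$-space. I would then invoke the analysis of section~4 of \cite{Herviketal15}: requiring every polynomial rank-$2$ tensor formed from these components (such as $\Phi_{iklm}\Phi_{jklm}$ and its higher-order analogues) to be proportional to $\delta_{ij}$ over-determines the system in three transverse dimensions and forces all b.w.\ $0$ Weyl components to vanish. But a vanishing b.w.\ $0$ part means the Weyl tensor is of type III or more special, contradicting genuineness. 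The main obstacle is exactly this transverse algebraic over-determination; the point of the reduction above is that, because the b.w.\ $0$ structure of a TN$_0$ spacetime is identical to that of an Einstein universal one, the computation transfers essentially verbatim.
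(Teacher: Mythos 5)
Your proposal is correct and follows essentially the same route as the paper: the paper's proof of proposition \ref{prop_nonE5DTN} consists precisely in observing that the arguments of section 4 of \cite{Herviketal15} for the non-existence of genuine type II/D universal spacetimes in five dimensions carry over, since the extra pieces TN$_0$ allows beyond universality (the null term $\phi\ell_a\ell_b$ and the type N traceless Ricci contribution) are of boost weight $-2$ and thus leave the purely algebraic b.w.\ $0$ analysis untouched. Your explicit justification of the alignment of $\bl$ with the mWAND and of the reduction to the b.w.\ $0$ sector is a sound elaboration of what the paper leaves implicit.
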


\section{Sufficient conditions for almost universal spacetimes}
\label{sec_sufficient}

In this section, we prove that various classes of Weyl types \tb{II, III, and N  Kundt spacetimes are TN and in some cases \tb{even} TNS. We also show that TN and TNS spacetimes can be constructed using an  appropriate Kerr-Schild transformation with a Weyl type II, III, or N  universal Kundt background.}

\subsection{Sufficiency for Weyl type III/N TN spacetimes}

First, let us prove that Kundt spacetimes of Weyl type III \tb{or N} and traceless Ricci type N  are TN (proposition \ref{U}).
We use a similar approach  as in the proof of Theorem 1.3 of \cite{HerPraPra14}.

\vspace{1mm}
\noindent
{\it{\tb{Proof of proposition \ref{U}:}}} 
	Note that by proposition 3.1 of \cite{KucPra16}, for Kundt spacetimes of Weyl type III/N and traceless Ricci type N, the Ricci and Weyl tensors are necessarily aligned.
	
	By proposition A.8 of \cite{universalMaxwell}, for 	Kundt spacetimes of Weyl type III/N and traceless Ricci type N, all covariant derivatives  of the Riemann tensor $\nabla^{(k)} \BR$ are of aligned type III (i.e., all non-vanishing components of  $\nabla^{(k)} \BR$ have negative boost weight).
	It immediately follows that all rank-2 tensors at least quadratic in  $\nabla^{(k)} \BR$
	(with $k\geq 0$) are of the form \eqref{TN}. Thus, it remains to show that the same result holds also for rank-2 tensors linear in  $\nabla^{(k)} \BR$.

	For the Weyl and Ricci type N Kundt spacetimes, it has been shown that
	$\nabla^{(k)} \BR$ ($k>0$) are of boost order at most $-2$ (see Proposition A.2 of \cite{EMclanek}) and thus all rank-2 tensors linear in  $\nabla^{(k)} \BR$
	are of the form \eqref{TN}. 
	
	Weyl type III case needs a more detailed discussion. 
	\tb{Note that  $S_{ab}$ of the form \eqref{Sab} is 1-balanced which follows from the primed version of eq. (2.50) of \cite{ghpclanek} (see also sec. 2.7 therein),}
	\tb{\be
    {\tho
     \omega' =D\omega' = 0,}
	\ee}
	 and therefore from lemma A.7 of \cite{universalMaxwell} (cf.\ also lemma \ref{kbalancedderivative}), $\nabla^{(k)} S_{ab}$ is 1-balanced as well. Thus, boost order of
	$\nabla^{(k)} S_{ab}$ is at most $-2$.
	
	In order for the contraction of $\nabla^{(k)} \BR$ to result in a rank-2 tensor, $k$ has to be even. 
	We will show that all rank-2 tensors linear in  $\nabla^{(k)} \BR$ are of the form \eqref{TN} using mathematical induction.

	First, consider the $k=2$ case. A change of the order of covariant derivatives in $\nabla^{(2)} \BR$,
	\begin{equation}\label{ricci}
	[\nabla,\nabla]\BR = \sum_\sigma \BR * \BR ,
	\end{equation}
	will result only in additional terms of b.w.\ $-2$ in contractions of RHS of \eqref{ricci}  due to the tracelessness of LHS of \eqref{ricci}.
	Employing the Bianchi identity (${R}_{ab[cd;e]}=0$) and changing the order of covariant derivatives if needed, any rank-2 tensor linear in 
	$\nabla^{(2)} \BR$ is either zero or of type N.

	Concerning $k > 2$, we proceed by induction. Assume that any change of the order of covariant derivatives in rank-2 tensors linear in $\nabla^{(n)} \BR$ produces only additional terms of b.w.\ $-2$ 
	and that any rank-2 contraction of $\nabla^{(n)} \BR$ is of type N. 
	Now, we prove that the same holds also for $\nabla^{(n+2)} \BR$.

	To prove the first property,  
	it is sufficient to prove that any rank-2 contraction of
	$\boldsymbol{Q} \equiv 
	( \nabla_{J} [\nabla,\nabla] \nabla_{I} \BR)$ is of boost order $-2$.
	Here, $I,J$ is a pair of arbitrary multi-indices satisfying $|I| + |J| = n$. Applying the Ricci identity on $\nabla_{I} \BR$ and employing the Leibniz rule, one obtains
		\begin{equation}
\nabla_{J} [\nabla,\nabla] \nabla_{I} \BR = \sum_\sigma \nabla_{J} \left( \BR * \nabla_{I} \BR  \right)
= \sum_\sigma \sum_{K \subset J} \binom{|J|}{|K|} \nabla_{K} \BR * \nabla_{J \setminus K}\nabla_{I} \BR.
\end{equation}
	Note that all terms with $|K|>0$ and $|I|^2+|J \setminus K|^2>0$ 
	are of boost order at most $(-2)$, while the $|K|=0$ terms and 
	$|I|^2+|J \setminus K|^2=0$ terms correspond to $\BR * \nabla^{(n)} \BR$.  Our induction assumption implies that rank-2 contractions of all these terms lead to a type N tensor and hence a rank-2 tensor linear in $\nabla^{(n+2)} \BR$ 
	is of type N. 
	Using this result together with the Bianchi identity and noting again that 
	$S_{ab}$ is 1-balanced,  the second property of $\nabla^{(n+2)} \BR$ follows. Therefore, both properties hold for all $\nabla^{(k)} \BR$, $k$ even. As a consequence, the terms linear in $\nabla^{(k)} \BR$ contribute 
	with b.w.\ $-2$ terms. 
\qed

\subsection{Sufficiency for Weyl type III/N TNS spacetimes}
\label{sec_sufTNS}

In the previous section, we have shown that Kundt spacetimes of Weyl type III \tb{or N} and traceless Ricci type N  are TN. Now, we  prove that some of these spacetimes are also TNS. While for Weyl type N, TN $\Leftrightarrow$ TNS
(cf. \cite{gurses2013}), for Weyl type III TN spacetimes, the Weyl tensor  and its derivatives in general contribute to b.w.\ $-2$ components of a  rank-2 tensor $E_{ab}$ (see appendix \ref{FKWC}). It turns out (cf., proposition \ref{prop_TNS}) that Weyl type N and III TNS spacetimes can be obtained from universal spacetimes by relaxing the Einstein condition and allowing  for type N traceless Ricci tensor. 

For this reason, we  generalize sufficient conditions for Weyl type III universal  spacetimes (proposition \ref{prop_U_III}).
It can be seen that the proof of universality of four-dimensional type III Einstein spacetimes with vanishing $F_2$
given in section 5.2 of \cite{HerPraPra17} can be straightforwardly generalized to arbitrary dimension, provided also $F_0=0$ (which is in four dimensions automatically satisfied) and thus Weyl type III Einstein Kundt spacetimes obeying $F_0=0=F_2$ are universal (proposition  \ref{prop_U_III}).
This generalizes the sufficient part of the Proposition 1.7 of \cite{HerPraPra17} to arbitrary dimension and theorem 1.4 of \cite{HerPraPra14} from the recurrent case ($\tau_i=0$) to a more general case $F_2=0$.
Indeed, it can bee seen from an expression for $F_2$ for Weyl type III Kundt spacetimes,
\be
F_2=-4\left\{\tau_i\tau_i(2\Psi'_j\Psi'_j-\Psi'_{jkl}\Psi'_{jkl})
-2\tau_i\tau_j\left[2\Psi'_{kil}\Psi'_{ljk}
-\Psi'_{ikl}\Psi'_{jkl}
+4\left(\Psi'_{k}\Psi'_{ijk}+\Psi'_{i}\Psi'_{j}\right)\right]\right\}\ell_a\ell_b\,,
\label{F2}
\ee
that $\tau_i=0$ implies $F_2 =0$, while $F_2 =0$ allows for non-vanishing $\tau_i$  (cf. \cite{HerPraPra17}).
\tb{Examples of Weyl type III universal spacetimes  with $\tau_i\not=0$ can be constructed
using a warp product
\be
 \dd \bar s^2 = \tfrac{1}{-\lambda z^2} ( \dd \tilde{s}^2 + \dd z^2),\ \ \lambda<0,
\label{warp_productU}
\ee
where $\dd \tilde{s}^2$ is a Ricci flat Weyl type III universal Kundt spacetime. Since the Weyl type \cite{OrtPraPra11} and the Kundt property (see eq. (B3) of \cite{OrtPraPra11}) are preserved under \eqref{warp_productU}
 and
 $\tilde F_0 = 0 = \tilde F_2$ implies $\bar F_0 = 0 = \bar F_2$ (see eq. (23)  of \cite{OrtPraPra11}),
the resulting spacetime is again a Weyl type III Einstein universal Kundt spacetime
with $\bar R = d(d-1)\lambda$:
\begin{proposition}\label{prop_warp_U}
	A warp product \eqref{warp_productU}, where  $\dd \tilde{s}^2$ is  a Ricci flat Weyl type III universal Kundt spacetime is a Weyl type III universal Kundt spacetime.
\end{proposition}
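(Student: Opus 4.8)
The plan is to verify that all three defining properties of a Weyl type III universal Kundt spacetime are inherited by the warped metric \eqref{warp_productU}, namely: (i) the Weyl type is III, (ii) the spacetime is Kundt and Einstein, and (iii) the curvature conditions $\bar F_0 = 0 = \bar F_2$ hold. By \textbf{proposition \ref{prop_U_III}}, these three facts together are equivalent to universality, so establishing them finishes the proof. I would therefore not attempt to re-derive universality from scratch but instead lean entirely on \eqref{prop_U_III} as the characterization, reducing the task to a transfer-of-properties argument under the warp product.

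First I would recall the behaviour of the relevant structures under the conformal-warp map \eqref{warp_productU}. Writing $\dd \bar s^2 = \Omega^2(\dd \tilde s^2 + \dd z^2)$ with $\Omega^2 = 1/(-\lambda z^2)$, the warped product over the one-dimensional factor parametrized by $z$ is exactly the construction analyzed in \cite{OrtPraPra11}. The Weyl type is preserved (this is the algebraic-type result of \cite{OrtPraPra11}), so if $\dd\tilde s^2$ is Weyl type III then so is $\dd \bar s^2$; in particular it does not degenerate to a more special or more general type. The Kundt property is likewise preserved under this class of warp products, which is precisely the content cited as eq.\ (B3) of \cite{OrtPraPra11}: the geodetic null congruence with vanishing optical matrix on $\dd \tilde s^2$ lifts to one on $\dd \bar s^2$. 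This disposes of properties (i) and (ii) above, except for the Einstein condition, which I treat next.

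Next I would address the Einstein and scalar-invariant content. Since $\dd \tilde s^2$ is Ricci flat, the warp product \eqref{warp_productU} over an (A)dS-type fibre of negative curvature parameter $\lambda$ yields an Einstein space; a direct computation of the Ricci tensor of the warped metric shows $\bar R_{ab} = (d-1)\lambda\, \bar g_{ab}$, so that $\bar R = d(d-1)\lambda$ as stated, confirming that $\dd\bar s^2$ is Einstein with the claimed scalar curvature. For the curvature scalars I would invoke the transformation rule recorded as eq.\ (23) of \cite{OrtPraPra11}, which expresses $\bar F_0$ and $\bar F_2$ of the warped metric in terms of $\tilde F_0$ and $\tilde F_2$ of the fibre (together with terms that vanish by the type-III alignment). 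Because $\dd \tilde s^2$ is universal and Weyl type III, \textbf{proposition \ref{prop_U_III}} gives $\tilde F_0 = 0 = \tilde F_2$, and the transformation rule then forces $\bar F_0 = 0 = \bar F_2$.

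Having assembled that $\dd \bar s^2$ is Weyl type III, Einstein, Kundt, and satisfies $\bar F_0 = 0 = \bar F_2$, I would close the argument by applying the ``if'' direction of \textbf{proposition \ref{prop_U_III}} to conclude that $\dd\bar s^2$ is universal, which is exactly the assertion of \textbf{proposition \ref{prop_warp_U}}. I expect the main obstacle to be a bookkeeping one rather than a conceptual one: namely, confirming that in the transformation rule for $\bar F_2$ the additional contributions generated by derivatives of the conformal factor $\Omega$ genuinely cancel or are subsumed into the $\tilde F_0, \tilde F_2$ terms for a \emph{type III} (as opposed to type N) fibre, since type III admits more non-vanishing boost-weight $-1$ and $-2$ Weyl components that could in principle source extra terms. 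Verifying that \eqref{warp_productU} preserves Weyl type III without accidental specialization or generalization, and that no spurious $F_0, F_2$ contributions survive, is the step that requires the explicit results of \cite{OrtPraPra11} and thus constitutes the technical crux.
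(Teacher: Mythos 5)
Your proposal is correct and follows essentially the same route as the paper: the paper's proof likewise establishes that the warp product \eqref{warp_productU} preserves the Weyl type and the Kundt property (citing \cite{OrtPraPra11}, including its eq.~(B3)), that the Ricci-flat seed yields an Einstein metric with $\bar R = d(d-1)\lambda$, and that $\tilde F_0 = 0 = \tilde F_2$ implies $\bar F_0 = 0 = \bar F_2$ via eq.~(23) of \cite{OrtPraPra11}, before invoking proposition \ref{prop_U_III} to conclude universality. The only difference is presentational: the paper treats the transfer of $F_0$ and $F_2$ as settled by the cited formula, whereas you flag the potential extra boost-weight $-1$ contributions for a type III fibre as the technical crux --- a reasonable caution, resolved exactly as you anticipate by the explicit results of \cite{OrtPraPra11}.
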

In general, the resulting spacetime has a non-vanishing $\tau_i$, i.e. it is not recurrent (c.f. also appendix \ref{sec_ex_Ricci}).}

	Let us proceed with proving that Weyl type III traceless Ricci type N  Kundt spacetimes obeying $F_0 =0 = F_2$ are TNS (proposition \ref{prop_TNS}).

\vspace{1mm}
\noindent
{\it{\tb{Proof of proposition \ref{prop_TNS}:}}} 
	By proposition A.8 of \cite{universalMaxwell}, 
	all tensors $\nabla^{(k)} \BR$ are of aligned type III.
	Furthermore, since $S_{ab}$ is 1-balanced (cf.\ proof of proposition \ref{U}) all derivatives of the Ricci tensor are of boost order $\leq -2$. 
	By counting boost weights, one can see that \tb{all mixed invariants (i.e., invariants constructed from both $\nabla^{(k)} \BS $  and $\nabla^{(l)} \BC$, $k,l\geq 0$) vanish and thus it is sufficient to consider rank-2 tensors $E_{ab}$ constructed purely from the Weyl tensor and its derivatives (at most quadratic in $\nabla^{(l)} \BC$) and purely from the   Ricci tensor and its derivatives (linear in $\nabla^{(k)} \BS$). }
	
\tb{	First, let us prove by the mathematical induction that all rank-2 tensors constructed from the Ricci tensor and its derivatives have the form \eqref{TN}, \eqref{TNS} (cf. \cite{gurses2013}).}

\tb{Obviously, the Ricci tensor of traceless type N has the form \eqref{TN}, \eqref{TNS}.
All rank-2 tensors constructed from  the second derivatives of the Ricci tensor ($\Box R_{ab}$,
$R^c_{\ a;cb}$, $R^c_{\ a;bc}$, $R^e_{\ e;ab}$)
either vanish or can be cast into the form \eqref{TN}, \eqref{TNS} using commutator 
	\be \label{commutator}
	[\nabla_c,\nabla_d]P_{e_1 e_2 \dots e_k}
	=-\sum_{i=1}^k {R^f}_{e_i cd} P_{e_1\dots e_{i-1} f e_{i+1} \dots e_k}
	\ee
	and the contracted Bianchi identity $\nabla_c R^c_a =
	\tfrac{1}{2}R_{;a}=0$.}

\tb{Now, assuming that all rank-2 tensors constructed from  the $k^{\rm th}$ derivative of the Ricci tensor have the form \eqref{TN}, \eqref{TNS}, we show
that this also holds for all rank-2 tensors constructed from  the $(k+2)^{\rm th}$ derivative of the Ricci tensor. 
Rank-2 tensors constructed from  the $(k+2)^{\rm th}$ derivative of the Ricci tensor can have free indices 
in three positions
\BEA
&1.& \ R_{\bullet\bullet;\dots}
 \label{SC1} \\
&2.& \ R_{\bullet .;\dots \bullet\dots} \label{SC2} \\
&3. &\ R_{. . ;\dots \bullet\dots\bullet\dots}
 \label{SC3}
\EEA
where free indices are indicated by $\bullet$. 
In the case 1., one can use commutator \eqref{commutator} to reshuffle covariant derivatives
and arrive at the form \eqref{TN}, \eqref{TNS}. The remaining part arising from the right hand side of commutator \eqref{commutator} contains only $k^{\rm th}$ derivatives of the Ricci tensor and thus have the form \eqref{TN}, \eqref{TNS} by our assumption. In the cases 2., 3.,
we can again reshuffle indices to obtain $(k+1)^{\rm th}$ derivative of the contracted Bianchi identity, i.e. $R^c_{\ \ .; c \dots}=0$, with
the remaining part arising from the right hand side of commutator \eqref{commutator} containing again only $k^{\rm th}$ derivatives of the Ricci tensor and thus having the form \eqref{TN}, \eqref{TNS} by our assumption.
}

		Now, let us proceed with terms constructed from the Weyl tensor and its derivatives.   	
		\tb{In 	 \cite{HerPraPra17}, it has been shown that for four-dimensional Weyl type III, Einstein Kundt spacetimes obeying $F_2=0$, all rank-2 
			tensors of the form $\nabla^{(k)}\BC\otimes\nabla^{(l)}\BC$  vanish. This proof can be straightforwardly generalized to the case of higher dimensions by adding an additional assumption $F_0=0$ which holds identically in four dimensions.}
		 \tb{Moreover, it is not affected by the presence of b.w. $-2$ terms in the Ricci tensor, thus it can be also straightforwardly generalized to the traceless Ricci type N case  and therefore terms $\nabla^{(l)}\BC \otimes \nabla^{(m)} \BC$ will not contribute to rank-2 tensors (cf., proposition 5.9 of \cite{HerPraPra17}).}	
Hence, only terms linear in $\nabla^{(k)}\BC$
		can contribute to a rank-2 tensor. Employing the  Bianchi
		\be
		\nabla^b C_{abcd} = \frac{d-3}{d-2} ( \nabla_d S_{ac} - \nabla_c S_{ad} ) \label{BianchiC}
		\ee
	and  Ricci identities, one arrives at 
	\be\label{Weylcontraction}
	  \nabla^d \nabla^b C_{abcd} = \frac{d-3}{d-2} \Box S_{ac} - \frac {(d-3)}{(d-2)(d-1)} R S_{ac},
	  \ee
	 being of the  TNS form \eqref{TN}, \eqref{TNS}.
	 
\tb{			  Using the mathematical induction, one can show that all rank-2 tensors linear in the Weyl tensor have the TNS form
	\eqref{TN}, \eqref{TNS}. We assume that all rank-2 tensors constructed from $k^{\rm th}$ derivative of the Weyl tensor have the form \eqref{TN}, \eqref{TNS}. Then using  commutator \eqref{commutator},  all rank-2 tensors constructed from $(k+2)^{\rm th}$ derivative of the Weyl tensor can be cast into the form $C^c_{\ \ \dots;c\dots}$,  which using \eqref{BianchiC}  and results from the previous paragraph  have the desired form, and additional terms containing  $k^{\rm th}$ derivative of the Weyl tensor that have the TNS  form \eqref{TN}, \eqref{TNS} by our starting assumption.} \qed

\subsection{Sufficiency for certain Weyl type II TN spacetimes}

Proposition \ref{U} addressing sufficient conditions for Weyl type III and N TN spacetimes does not include all TN spacetimes -- as we will show in this section
Weyl type II TN spacetimes also exist.

First, let us observe that one can construct  Weyl type II TN spacetimes from
Weyl type III and N TN spacetimes by taking a direct product with maximally symmetric spaces. A 
simple generalization  of proof of proposition 6.2  of \cite{Herviketal15} leads to a generalization of this
proposition  
to TN spacetimes
\begin{proposition}
	\label{prop_direct_TN}
	Let $M =  M_0 \times M_1 \times \dots \times M_{N-1}$, where $M_0$    is a  Lorentzian manifold and $M_1 \dots M_{N-1}$ are non-flat Riemannian maximally symmetric spaces. Let all blocks $ M_\alpha, \alpha=0 \dots N-1$, be of the same dimension and with the same value of the Ricci scalar $R_\alpha$.  If 
	$M_0$ is a  TN   Weyl type III or N  spacetime then $M$ is a type II TN spacetime.   
\end{proposition}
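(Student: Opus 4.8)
The plan is to adapt the proof of proposition 6.2 of \cite{Herviketal15} to the almost universal setting, replacing the conclusion $E_{ab}=c\,g_{ab}$ valid for universal factors by the weaker TN form \eqref{TN}. First I would fix a null frame adapted to the product structure: since $M_0$ is the only Lorentzian block, the null vectors $\bl,\bn$ together with $n-2$ spacelike frame vectors span $M_0$ (all blocks having the common dimension $n$), while the remaining spacelike frame vectors span the Riemannian blocks $M_1,\dots,M_{N-1}$. Three standing facts drive the argument. (i) In a direct product the Riemann tensor is block diagonal (components with indices in two different blocks vanish) and the Levi-Civita connection, hence every covariant derivative, respects the splitting; in particular derivatives along a Riemannian block annihilate any tensor built from $M_0$ curvature. (ii) Each maximally symmetric block $M_\alpha$ ($\alpha\geq1$) has covariantly constant curvature proportional to $g^{(\alpha)}_{a[c}g^{(\alpha)}_{d]b}$, so it is universal: every rank-2 tensor built from its curvature is a constant multiple of $g^{(\alpha)}_{ab}$, and all its curvature invariants are constants fixed by $n$ and $R_\alpha$. (iii) Because the boost \eqref{boost} acts only on $\bl,\bn\in M_0$, the curvature of the Riemannian blocks is purely boost weight $0$, while $M_0$, being Weyl type III or N with traceless Ricci of the form \eqref{Sab}, has curvature of boost order at most $0$; moreover its boost weight $0$ part reduces to the constant-curvature tensor fixed by $R_0$, since the Weyl b.w.\ $0$ part vanishes (type III/N) and $S_{ab}=\omega'\ell_a\ell_b$ carries only boost weight $-2$.

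Granting (iii), the Weyl type follows quickly: the product Riemann tensor has boost order at most $0$, and subtracting the Ricci and scalar parts (also of boost order at most $0$) leaves a Weyl tensor with vanishing b.w.\ $+1$ and $+2$ components, i.e.\ Weyl type II.

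For the TN property I would decompose an arbitrary symmetric rank-2 curvature tensor $E_{ab}$ according to the blocks containing its free indices. By (i) and the block-confinement of every curvature factor, $E_{ab}$ is block diagonal, so all mixed components vanish. On the $M_0$ block, $E$ is a sum of rank-2 tensors built from $M_0$'s curvature times constant invariants coming from the Riemannian blocks (by (ii)); since $M_0$ is TN by assumption, this block equals $\tilde\lambda\,g^{(0)}_{ab}+\phi\,\ell_a\ell_b$ with $\tilde\lambda$ constant and with no b.w.\ $-1$ part. On each Riemannian block, $E$ is a constant multiple of the metric, $c_\alpha\,g^{(\alpha)}_{ab}$, by (ii). Thus $E_{ab}=\tilde\lambda\,g^{(0)}_{ab}+\sum_{\alpha\geq1}c_\alpha\,g^{(\alpha)}_{ab}+\phi\,\ell_a\ell_b$, which is of the required form \eqref{TN} if and only if all the constants $\tilde\lambda,c_1,\dots,c_{N-1}$ coincide.

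The main obstacle is exactly this last coincidence, and it is where the hypotheses of equal dimension and equal Ricci scalar enter. I would argue that the boost weight $0$ part of $E$ is built solely from the boost weight $0$ part of the product curvature: all curvature factors having boost order at most $0$, no positive-b.w.\ contribution can arise in a b.w.\ $0$ component. By (iii) this b.w.\ $0$ curvature coincides with the curvature of the auxiliary product $\hat M=\hat M_0\times M_1\times\dots\times M_{N-1}$ of $N$ \emph{isometric} maximally symmetric spaces of dimension $n$ and Ricci scalar $R$. This symmetric product is invariant under isometries permuting its identical factors; any rank-2 curvature tensor on it is therefore invariant under these permutations and must assign the same coefficient to every block, forcing $\tilde\lambda=c_1=\dots=c_{N-1}=:\lambda$. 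Hence $E_{ab}=\lambda g_{ab}+\phi\,\ell_a\ell_b$ and $M$ is TN. Checking carefully that this permutation argument is legitimate at the level of boost weight $0$ parts — that no lower-boost-weight curvature of $M_0$ leaks into the b.w.\ $0$ coefficient and spoils the symmetry — is the step I expect to require the most care.
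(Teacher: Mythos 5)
Your route is exactly the one the paper has in mind: its entire proof is the remark that proposition 6.2 of \cite{Herviketal15} generalizes from universal to TN factors, and your block-factorization of curvature monomials, constancy of block invariants (CSI of $M_0$ plus maximal symmetry of the others), and reduction of the b.w.\ 0 part to the symmetric background product reproduce that argument with the $\phi\,\ell_a\ell_b$ term carried along. The step you singled out as risky is in fact sound: since $M_0$ is a Kundt TN spacetime of Weyl type III/N with traceless Ricci type N, all $\nabla^{(k)}\BR$ with $k\geq 1$ are of aligned type III, i.e.\ of boost order at most $-1$ (proposition A.8 of \cite{universalMaxwell}), while the curvature of the Riemannian blocks is covariantly constant; hence no differentiated factor can contribute to a b.w.\ 0 component and nothing ``leaks'' into the symmetric-background computation. (A small point you pass over quickly: block-diagonality of $E_{ab}$ also needs that no rank-1 curvature tensor survives on a Riemannian block to pair with a rank-1 tensor on $M_0$; this is immediate from $\nabla\tilde{\BR}=0$ and parity of ranks, but it deserves a line.)

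The genuine gap is the permutation-isometry argument for the coincidence of the constants. The auxiliary product $\hat M = \hat M_0 \times M_1 \times \dots \times M_{N-1}$ does \emph{not} consist of $N$ isometric factors: $\hat M_0$ is Lorentzian (de Sitter or anti--de Sitter), the others Riemannian (spheres or hyperbolic spaces), and no isometry of $\hat M$ exchanges blocks of different signature. Your symmetry therefore yields only $c_1=\dots=c_{N-1}$ and says nothing about $\tilde\lambda=c_1$, which is precisely the equality the equal-dimension/equal-Ricci-scalar hypotheses must deliver. The correct (and purely algebraic) argument is that each block's curvature equals $\Lambda_\alpha\bigl(g^{(\alpha)}_{ac}g^{(\alpha)}_{bd}-g^{(\alpha)}_{ad}g^{(\alpha)}_{bc}\bigr)$ with $\Lambda_\alpha = R_\alpha/\bigl(n(n-1)\bigr)$, so any rank-2 contraction of a product of such tensors within a block is the block metric times a polynomial in $\Lambda_\alpha$ and $n$ whose coefficients are signature-independent (every internal trace reduces to factors of $g^{ab}g_{ab}=n$); equal $n$ and equal $R_\alpha$ then force the same coefficient on every block, Lorentzian or Riemannian. (Equivalently, one may invoke analytic continuation between $\hat M_0$ and its Riemannian model.) With the isometry step replaced by this computation, your proof closes and agrees with the paper's.
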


Now, we proceed with more general type II TN spacetimes.

\begin{remark}\label{rem_prop7.1}
	It is straightforward to generalize Proposition 7.1 of \cite{Herviketal15} for
	spacetimes with the Ricci tensor of the form \eqref{TN} leading to the following slightly more general result: for type II Kundt spacetimes with the Ricci tensor of the form \eqref{TN} admitting
	a  parallelly propagated null frame along an mWAND $\bl$, all covariant derivatives of the Riemann tensor $\nabla^{(k)}\BR $, $k\geq 1$, are at most of boost order $-2$ providing the following  three conditions hold:
	\begin{enumerate}
		\item 	b.w.\ $-1$ components of the Weyl tensor vanish,\\ \vspace{-6mm}
		\item for b.w.\ $-2$ components of the Weyl tensor, $\Omega'_{ij}$, $D\Omega'_{ij}=0$,\\\vspace{-6mm}
		\item the boost order of $\nabla^{1}C$ is at most $-2$. 
	\end{enumerate}
\end{remark}

Let us consider  higher-dimensional  generalizations of the Khlebnikov--Ghanam--Thompson metric \cite{Khleb,GT2001,Gibbons:2007zu} consisting of $N$ 2-blocks/3-blocks introduced in \cite{Herviketal15} in the context of universal spacetimes
\begin{equation}
\dd s^2 = 2 \dd u \dd v + ({\tilde \lambda} v^2 + H(u, x_\alpha, y_\alpha)) \dd u^2 + \frac{1}{|{\tilde \lambda} |} \sum_{\alpha=1}^{N-1} (\dd x_\alpha^2 + \sss^2(x_\alpha) \, \dd y_\alpha^2), \ \ \ 
\label{GT-2blocks}
\end{equation}
with $\sss(x_\alpha) = \sin (x_\alpha)$ for ${\tilde \lambda}  >0$,  $\sss(x_\alpha) = \sinh (x_\alpha)$ for ${\tilde \lambda}  <0$,
and
\begin{equation}
\dd s^2 = 2 \dd u \dd v + H(u, z, x_\alpha, y_\alpha, z_\alpha) \dd u^2 + 2 \frac{2v}{z} \dd u \dd z
- \frac{2}{{\tilde \lambda}  z^2} \dd z^2 - \frac{2}{{\tilde \lambda} } \sum_{\alpha=1}^{N-1} \left[\dd x_\alpha^2 + sh^2_\alpha \, (\dd y_\alpha^2 + s^2_\alpha \dd z_\alpha^2)\right],
\label{appGT3:GTmetric}
\end{equation}
with ${\tilde \lambda} <0$,   $s_\alpha = \sin (y_\alpha)$,  and $sh_\alpha = \sinh (x_\alpha)$, respectively.

Both these metrics obey all three conditions given in remark \ref{rem_prop7.1} and the Ricci tensor is of the form \eqref{TN} with $\phi=-\tfrac{1}{2}\Box H$
and $\phi\propto \Box H-2{\tilde \lambda}  zH_{,z}$, respectively.

By remark \ref{rem_prop7.1}, all non-vanishing rank-2 tensors constructed from $\nabla^{(k)}\BR$, $k\geq 0$, containing at least one term  with $k\geq 1$ are at most of boost order $-2$. Rank-2 tensors constructed from $\nabla^{(k)}\BR$, $k=0$,
are of the form \eqref{TN} (see sec. 7.2 of \cite{Herviketal15}).
We can conclude with 
\begin{proposition}
	Weyl type II metrics \eqref{GT-2blocks} and  \eqref{appGT3:GTmetric} are TN.
\end{proposition}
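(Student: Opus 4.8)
The plan is to reduce the statement to a boost-weight counting argument of the same flavour as the proof of proposition \ref{U}, using remark \ref{rem_prop7.1} as the key structural input. First I would record the two facts asserted just above the proposition, which I would establish by a direct frame computation on the explicit metrics \eqref{GT-2blocks} and \eqref{appGT3:GTmetric}: (i) the Ricci tensor is of the form \eqref{TN} with the stated $\phi$, so that the Riemann tensor has boost order at most $0$ (its Weyl part being type II and hence of boost order $\leq 0$, its Ricci part of boost order $\leq 0$ by \eqref{TN}); and (ii) the three hypotheses of remark \ref{rem_prop7.1} hold, namely vanishing of the b.w.\ $-1$ Weyl components, $D\Omega'_{ij}=0$, and boost order of $\nabla^{(1)}\BC$ at most $-2$. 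Remark \ref{rem_prop7.1} then gives $\nabla^{(k)}\BR$ of boost order $\leq -2$ for every $k\geq 1$.

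With these in hand, any symmetric rank-2 tensor $E_{ab}$ built polynomially from $g_{ab}$, $\BR$ and its covariant derivatives is a sum of monomials, each a full contraction of a tensor product of factors $\nabla^{(k_1)}\BR,\dots,\nabla^{(k_p)}\BR$ (and copies of the metric) down to two free indices. I would split these into monomials in which every factor is undifferentiated ($k=0$) and monomials containing at least one factor with $k\geq 1$. For the latter the argument is pure boost-weight arithmetic: boost weight is additive under the tensor product and a contraction can only lower the boost order, since a contracted $\ell$--$n$ or $m^{(i)}$--$m^{(i)}$ index pair contributes net zero while the free indices keep their weights. Hence the boost order of such a monomial is at most the sum of the boost orders of its factors; as one factor has boost order $\leq -2$ by (ii) and every other factor has boost order $\leq 0$ by (i), the whole monomial has boost order $\leq -2$. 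A symmetric rank-2 tensor of boost order $\leq -2$ has only its b.w.\ $-2$ frame component nonvanishing and is therefore proportional to $\ell_a\ell_b$, i.e.\ of the form \eqref{TN} with $\lambda=0$.

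For the undifferentiated ($k=0$) monomials --- polynomial rank-2 curvature tensors such as $R_{ab}$, $Rg_{ab}$, $C_{acde}C_b{}^{cde}$, $C_{acbd}R^{cd}$, and so on --- I would invoke sec.\ 7.2 of \cite{Herviketal15}, where these metrics were analysed as universal backgrounds and the b.w.\ $\geq -1$ part of every such tensor was shown to reduce to $\lambda g_{ab}$ with constant $\lambda$. The only new feature here is the non-Einstein contribution carried by $H$, which enters solely through the b.w.\ $-2$ (type N) part of the curvature; contracting it against the b.w.\ $0$ data can only generate additional $\ell_a\ell_b$ terms and cannot disturb the b.w.\ $0$ and b.w.\ $\pm1$ structure. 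Thus every $k=0$ monomial is again of the form \eqref{TN}. Summing the two families, and using that the spacetimes are CSI so the accumulated trace coefficient is a constant $\lambda$ while the accumulated b.w.\ $-2$ coefficients combine into a single function $\phi$, yields $E_{ab}=\lambda g_{ab}+\phi\ell_a\ell_b$, proving TN.

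The computational heart of the argument --- and the step I expect to be the main obstacle --- is verifying hypotheses (i)--(ii) for the explicit metrics. This means setting up a parallelly propagated null frame along the Kundt congruence $\bl$, computing the Ricci and Weyl components to confirm the type II alignment with vanishing b.w.\ $-1$ Weyl terms and $D\Omega'_{ij}=0$, and checking that $\nabla^{(1)}\BC$ has boost order $\leq -2$. The block and warped-product structure of \eqref{GT-2blocks} and \eqref{appGT3:GTmetric} is precisely what makes these conditions hold, and it is also what underwrites the appeal to sec.\ 7.2 of \cite{Herviketal15} for the $k=0$ tensors; once these frame computations are secured, the remainder is the boost-order bookkeeping above.
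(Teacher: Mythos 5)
Your proposal is correct and takes essentially the same route as the paper's proof: verify the Ricci form \eqref{TN} and the three hypotheses of remark \ref{rem_prop7.1} for the explicit metrics, use that remark to bound every $\nabla^{(k)}\BR$ with $k\geq 1$ by boost order $-2$ so that any rank-2 monomial containing a differentiated factor reduces to a $\phi\,\ell_a\ell_b$ term by boost-weight counting, and invoke sec.~7.2 of \cite{Herviketal15} for the purely algebraic ($k=0$) rank-2 tensors. Your explicit observation that the non-Einstein contribution from $H$ enters only through the b.w.\ $-2$ part of the curvature and so cannot disturb the b.w.\ $0$ analysis of \cite{Herviketal15} spells out a point the paper leaves implicit in its citation.
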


\subsection{Kerr-Schild transformations of universal spacetimes are TN }
\label{subsec_KS}

 To make a connection with the Kerr-Schild approach of \cite{HorItz99,Gurses:2016moi}, let us study TN and TNS spacetimes generated from universal spacetimes by the Kerr-Schild transformation. Results of this subsection in part overlap with those of subsection \ref{sec_sufTNS}. However, they apply also to Weyl type II. For some applications, the Kerr-Schild formulation of the results may be also more practical.

Let us consider spacetimes
\be
\Bg = \Bg_{\rm{UK}} +2{\cal H}\bl\otimes\bl\,,\ \ D{\cal H}=0\, , \label{KSUK}
\ee
where $\Bg_{\rm{UK}}$ are  universal Kundt spacetimes and $\bl$ is the Kundt null direction of the background spacetime $\Bg_{\rm{UK}}$. Note that $\Bg_{\rm{UK}}$ spacetimes are necessarily Einstein, algebraically special and degenerate Kundt. Thus they admit a metric of the form
\cite{ColHerPel09a,Coleyetal09}, 
\be
\dd s^2 =2\dd u\left[\dd r+H(u,r,x)\dd u+W_\alpha(u,r,x)\dd x^\alpha\right]+ g_{\alpha\beta}(u,x) \dd x^\alpha\dd x^\beta , \label{Kundt_deg}
\ee
where $\alpha,\beta=2 \dots n-1$ with
\be
W_{\alpha}(u,r,x)=rW_{\alpha}^{(1)}(u,x)+W_{\alpha}^{(0)}(u,x) , \ \ 	
H(u,r,x)=r^2H^{(2)}(u,x)+rH^{(1)}(u,x)+H^{(0)}(u,x) .
\label{Kundt_deg2}
\ee

Kundt null direction is $\bl=\dd u$ and thus the Kerr-Schild transformation  \eqref{KSUK} amounts to the transformation $H^{(0)}(u,x) \rightarrow H^{(0)}(u,x)+ {\cal H} $ in \eqref{Kundt_deg}, \eqref{Kundt_deg2}   and the resulting metric \eqref{KSUK} is clearly also Kundt degenerate metric.

From appendix \ref{sec_KS} it follows that the only changes in the curvature of the Kerr-Schild transformed metric \eqref{KSUK} appear in b.w.\ $-2$ components of the Weyl and Ricci tensors.

Since b.w.\ of ${\cal H}$ is $(-2)$ the Kerr--Schild perturbation $2{\cal H}\bl\otimes\bl$ in \eqref{KSUK}
is 1-balanced. Lemma A.7 of \cite{universalMaxwell} 
 then implies that
an arbitrary covariant derivative of $2{\cal H}\bl\otimes\bl$ remains 1-balanced. Thus
 arbitrary terms constructed from the perturbation 
and its derivatives entering an arbitrary rank-2 tensor $E_{ab}$ (constructed from the Riemann tensor of the full metric  \eqref{KSUK} and its derivatives)
influence only b.w.\ $(-2)$ components of $E_{ab}$  and thus do not violate the form \eqref{TN}.
Therefore, using also the results of appendix \ref{sec_KS}
we arrive at
\begin{proposition}
	For background metrics $\Bg_{\rm{UK}}$ of Weyl types II, III, and N, the Kerr-Schild transformation \eqref{KSUK} preserves the Weyl type and the resulting metric is TN.
  \label{prop_KS_Weyltypes}
\end{proposition}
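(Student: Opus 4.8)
The plan is to establish the two assertions — preservation of the Weyl type and the TN property — separately, both building on the structural result of appendix \ref{sec_KS} (only the b.w.\ $-2$ components of the Weyl and Ricci tensors are altered by \eqref{KSUK}) together with the fact that the perturbation $2{\cal H}\,\bl\otimes\bl$ is $1$-balanced. The latter I would record first: in a parallelly propagated frame along the affinely parametrized geodetic $\bl$ the only nonvanishing frame component of $2{\cal H}\ell_a\ell_b$ carries b.w.\ $-2$, and $D(2{\cal H}\ell_a\ell_b)=2(D{\cal H})\ell_a\ell_b=0$ by the hypothesis $D{\cal H}=0$; hence the perturbation is $1$-balanced. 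By lemma \ref{kbalancedderivative}, every covariant derivative of the perturbation is again $1$-balanced, and via appendix \ref{sec_KS} the induced change $\delta\BR$ of the Riemann tensor, the connection difference $\delta\Gamma$, the $\ell^a\ell^b$ correction to the inverse metric, and all their covariant derivatives are likewise $1$-balanced, i.e.\ of boost order at most $-2$.

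For the Weyl type I would argue directly from appendix \ref{sec_KS}: since only b.w.\ $-2$ Weyl components are modified, every component of b.w.\ $\geq -1$ is left untouched, so the boost order of the Weyl tensor is unchanged. A type N background therefore stays type N, a type III background retains its (generically nonzero) b.w.\ $-1$ components and stays type III, and a type II background retains its b.w.\ $0$ components and stays type II. This establishes the first half of the proposition.

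For the TN property I would take an arbitrary symmetric rank-2 tensor $E_{ab}$ built polynomially from the full metric $\Bg$, its Riemann tensor and covariant derivatives, and expand each constituent factor into a background part and a perturbation part, using $\Bg^{-1}=\Bg_{\rm{UK}}^{-1}-2{\cal H}\ell^a\ell^b$, $\nabla=\nabla_{\rm{UK}}+\delta\Gamma$, and $\nabla^{(k)}\BR=\nabla_{\rm{UK}}^{(k)}\BR_{\rm{UK}}+(\text{perturbation})$, every perturbation piece being $1$-balanced by the previous paragraph. Expanding $E_{ab}$ into monomials then splits it into (i) the purely background contribution, which is a rank-2 tensor built polynomially from the curvature of the \emph{universal} background $\Bg_{\rm{UK}}$ and is therefore, by the defining property of universality, a constant multiple $\lambda\,\Bg_{\rm{UK}}$, and (ii) monomials containing at least one perturbation factor. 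Using that boost order is additive under tensor products and non-increasing under contraction, each type-(ii) monomial — one factor of boost order $\leq -2$ multiplying background factors of boost order $\leq 0$ (the Einstein background Ricci being pure trace and the type II/III/N Weyl and its derivatives not raising boost order) — has boost order at most $-2$, so its only surviving component has b.w.\ $-2$ and is proportional to $\ell_a\ell_b$. Since $\Bg_{\rm{UK}}=\Bg-2{\cal H}\,\bl\otimes\bl$, the contribution (i) reads $\lambda g_{ab}-2\lambda{\cal H}\ell_a\ell_b$; absorbing $-2\lambda{\cal H}\ell_a\ell_b$ together with all type-(ii) contributions into a single coefficient $\phi$, I obtain $E_{ab}=\lambda g_{ab}+\phi\ell_a\ell_b$, i.e.\ the form \eqref{TN}.

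I expect the main obstacle to lie in the bookkeeping of step (iii): one must verify that \emph{all} perturbation-generated pieces — not merely $\delta\BR$ but also the connection differences $\delta\Gamma$ produced whenever the full $\nabla$ acts, and the $\ell^a\ell^b$ correction to the inverse metric entering every contraction — are genuinely $1$-balanced, so that each monomial carrying any such factor is confined to boost order $\leq -2$. This is exactly where lemma \ref{kbalancedderivative} (closure of $1$-balance under covariant differentiation) and the degenerate-Kundt structure of $\Bg_{\rm{UK}}$ are indispensable; once these are secured, the additivity of boost order under products and its non-increase under contraction make the separation into cases (i) and (ii) automatic and the conclusion $E_{ab}=\lambda g_{ab}+\phi\ell_a\ell_b$ immediate.
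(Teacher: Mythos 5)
Your proposal is correct and takes essentially the same route as the paper: the paper likewise combines the $1$-balance of the Kerr--Schild perturbation $2\mathcal{H}\,\bl\otimes\bl$ (propagated to all covariant derivatives via lemma \ref{kbalancedderivative}), the frame-component formulas of appendix \ref{sec_KS} showing only b.w.\ $-2$ curvature components change, and the universality of the background to obtain the $\lambda g_{ab}+\phi\ell_a\ell_b$ form. Your explicit bookkeeping of the connection difference and the inverse-metric correction merely spells out what the paper compresses into ``arbitrary terms constructed from the perturbation and its derivatives.''
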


For Kerr-Schild spacetimes with a flat or (A)dS background, the Kerr-Schild transformation \eqref{KSUK} gives a Weyl type N spacetime \cite{OrtPraPra09,MalPra11}. Thus clearly the Kerr-Schild transformation \eqref{KSUK} with universal backgrounds represents a more general class of spacetimes.

Since for Weyl type III Kundt spacetimes, the curvature polynomials $F_0 = (\tfrac{1}{2}\Psi'_{ijk} \Psi'_{ijk} - \Psi'_i \Psi'_i) \ell_a \ell_b$ 
 and $F_2$ \eqref{F2} are preserved by \eqref{KSUK} (see \eqref{KS-Omega}--\eqref{KS-Omegap} and \eqref{KS-tau}), from proposition \ref{prop_TNS} it follows
	\tb{\begin{proposition}
		The  Kerr-Schild transformation \eqref{KSUK} 
		of a   Weyl type III universal background metric $\Bg_{\rm{UK}}$ is a Weyl type III TNS spacetime.
		\label{prop_KS_TNS}
	\end{proposition}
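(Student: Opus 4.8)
The plan is to combine the TNS result for Weyl type III Kundt spacetimes obtained in Proposition \ref{prop_TNS} with the invariance of the curvature polynomials $F_0$ and $F_2$ under the Kerr--Schild transformation \eqref{KSUK}. The strategy is essentially to verify that the Kerr--Schild transform of a Weyl type III universal background satisfies the hypotheses of Proposition \ref{prop_TNS}, namely that it is a Kundt spacetime of Weyl type III and traceless Ricci type N obeying $F_0 = 0 = F_2$, and then invoke that proposition directly.

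First I would establish the algebraic and optical properties of the transformed metric. Since $\Bg_{\rm UK}$ is by assumption a Weyl type III universal Kundt background with Kundt null direction $\bl = \dd u$, and the transformation \eqref{KSUK} amounts to the shift $H^{(0)} \to H^{(0)} + {\cal H}$ with $D{\cal H}=0$, the discussion following \eqref{Kundt_deg2} already shows that the resulting metric is again a degenerate Kundt metric with the same $\bl$. By Proposition \ref{prop_KS_Weyltypes}, the Weyl type is preserved, so the transformed spacetime remains Weyl type III. It then remains to check the traceless Ricci type: by the results of Appendix \ref{sec_KS}, the only curvature components modified by \eqref{KSUK} lie at boost weight $-2$, so the traceless Ricci tensor of the transformed metric can only acquire a b.w.\ $-2$ contribution, i.e.\ it remains of the form \eqref{Sab}, hence traceless Ricci type N.

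Next I would verify the two curvature conditions $F_0 = 0 = F_2$. Because the original background $\Bg_{\rm UK}$ is a Weyl type III \emph{universal} Kundt spacetime, Proposition \ref{prop_U_III} guarantees that it satisfies $F_0 = 0 = F_2$. The key point, already noted in the paragraph preceding the statement, is that for Weyl type III Kundt spacetimes the polynomials $F_0 = (\tfrac{1}{2}\Psi'_{ijk}\Psi'_{ijk} - \Psi'_i\Psi'_i)\ell_a\ell_b$ and $F_2$ given in \eqref{F2} are built solely from the b.w.\ $-1$ and $-2$ Weyl components $\Psi'_i$, $\Psi'_{ijk}$ and the Ricci rotation coefficient $\tau_i$, all of which are unchanged under \eqref{KSUK} (see \eqref{KS-Omega}--\eqref{KS-Omegap} and \eqref{KS-tau}). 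Consequently $F_0 = 0 = F_2$ continues to hold for the transformed spacetime.

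Having confirmed that the Kerr--Schild transform is a Weyl type III, traceless Ricci type N Kundt spacetime with $F_0 = 0 = F_2$, the conclusion follows immediately by applying Proposition \ref{prop_TNS}, which asserts precisely that such spacetimes are TNS. I expect the main obstacle to be the careful bookkeeping of which curvature quantities are genuinely invariant under \eqref{KSUK}: one must confirm that the shift in $H^{(0)}$ affects only the b.w.\ $-2$ sector and leaves $\Psi'_i$, $\Psi'_{ijk}$, and $\tau_i$ untouched, so that both $F_0$ and $F_2$ are preserved. This is exactly the content supplied by Appendix \ref{sec_KS}, so once those transformation rules are in hand the argument is short.
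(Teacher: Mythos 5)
Your proposal is correct and follows essentially the same route as the paper: the paper's argument is precisely that the universal background satisfies $F_0=0=F_2$ (proposition \ref{prop_U_III}), that these polynomials are preserved by \eqref{KSUK} because $\tau_i$ and the negative boost weight Weyl components entering them are unchanged (eqs.\ \eqref{KS-Omega}--\eqref{KS-Omegap}, \eqref{KS-tau}, using $D\mathcal{H}=0$ and the Einstein background), and that the transformed metric is a Weyl type III, traceless Ricci type N Kundt spacetime, so proposition \ref{prop_TNS} applies. Your write-up merely makes explicit the bookkeeping (Kundt preservation, proposition \ref{prop_KS_Weyltypes}, and the type N form of $S_{ab}$ from appendix \ref{sec_KS}) that the paper leaves implicit in the sentence preceding the proposition.
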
}

\section{Applications in higher-order gravities}
\label{sec_app}

Since  terms $\Box^n S_{ab}$ appear in most higher-order gravity field equations, cf. appendix \ref{FKWC}, we start with an examination of their form in TN spacetimes.

Let us assume a spacetime to be Kundt of
aligned Riemann type II
with the b.w.\ 0 part of the Ricci tensor  proportional to the metric and with
$\bl$ being the (affinely parametrized) Kundt vector.
Then, in a parallelly propagated frame with $\tau_{i} = L_{1i}$\footnote{Such a null frame always exists, cf. \cite{PraPra08}.}, for a function $f$ satisfying $D f = 0$, one has
\begin{align}\label{fll}
\Box(f \ell_a \ell_b) = \left[(\mathcal{D} - 2R_{0101}) f\right] \ell_a \ell_b,
\end{align}
where the differential operator $\mathcal{D}$ is defined as
\begin{equation}\mathcal{D} \equiv \Box  + 4\tau_{i}\delta_i  + 2\tau_{i}\tau_{i}  + 2 \Phi + 2\frac{d-2}{d-1}\frac{R}{d}.\label{calD}
\end{equation} 

Employing the Bianchi and Ricci identities and commutators \cite{Coleyetal04vsi}, one can see that for
TN spacetimes, 
the b.w.\ $-2$ components of $\Box^n S_{ab}$ are constant along geodesics generated by $\bl$ for any $n \in \mathbb{N}_0$. Using \eqref{fll}, 
one arrives at	
\begin{equation}\label{boxy}
\Box^n S_{ab} = \left[ (\mathcal{D} -2R_{0101})^n \omega^\prime \right] \ell_a \ell_b. 
\end{equation}

\subsection{Quadratic gravity}\label{QGsekce}
Let us apply the obtained results on the case of quadratic gravity. 
Its Lagrangian reads  
\begin{equation}
\mathcal{L}_{QG} = \textstyle{ \frac{1}{\kappa}}(R-2\Lambda_0) + \alpha R^2 + \beta R_{ab}R^{ab} + 
\gamma (R_{abcd}R^{abcd} - 4R_{ab} R^{ab} + R^2 ),
\end{equation}
where the cosmological constant $\Lambda_0$ and coupling constants $\kappa, \alpha, \beta, \gamma$ of the theory are fixed. 
The metric variation 
of the action corresponding to $\mathcal{L}_{QG}$ then yields \cite{GulluTekin2009}
\begin{align}\label{QGtensor}
E_{ab} =& {\textstyle{\frac{1}{\kappa}}}
\left( R_{ab} - {\textstyle{\frac{1}{2}}}
R g_{ab} + \Lambda_0 g_{ab} \right)
+ 2 \alpha R \left( R_{ab} -{\textstyle{\frac{1}{4}}}
R g_{ab} \right)
+ \left( 2 \alpha + \beta \right)\left( g_{ab} \Box - \nabla_a \nabla_b \right) R \nonumber \\
&+ 2 \gamma \bigg( R R_{ab} - 2 R_{acbd} R^{cd}
+ R_{acde} R_{b}^{\phantom{b}cde} - 2 R_{ac} R_{b}^{\phantom{b}c}
-{\textstyle{\frac{1}{4}}}
g_{ab} \left( R^{cdef}R_{cdef} - 4 R^{cd}R_{cd} + R^2 \right) \bigg) \nonumber \\
&+ \beta \Box \left( R_{ab} -{\textstyle{\frac{1}{2}}}
R g_{ab} \right)
+ 2 \beta \left( R_{acbd} -{\textstyle{\frac{1}{4}}}
g_{ab} R_{cd} \right) R^{cd}.
\end{align}
In this  section, two different classes of vacuum solutions within TN spacetimes will be discussed. 

\subsubsection{Weyl type III solutions}
For Weyl type III TN metrics satisfying $F_0 = 0$, the tensor $E_{ab}$ of \eqref{QGtensor} simplifies to \eqref{TNS} with 
\begin{align}
\lambda &=    \frac{\Lambda_0}{\kappa} - \frac{d-2}{2\kappa d}R - \frac{(d-4)(\alpha d + \beta)}{2d^2}R^2
- \frac{(d-4)(d-3)(d-2)}{2d^2(d-1)}\gamma R^2,\label{cg1} \\ 
a_0 &= \frac{1}{\kappa} + 2R \left( \alpha +\beta \frac{d-2}{d(d-1)} + \gamma \frac{(d-3)(d-4)}{d(d-1)} \right),\label{cg2} \\
a_1 &= \beta ,\label{cg3}
\end{align}
and with the rest of the coefficients $\{ a_i \}$ being zero. Hence, employing \eqref{boxy}, the vacuum field equations $E_{ab} = 0$ read 
\begin{equation}\label{QG1}
(d-4)   \frac{ (d-1)(\alpha d + \beta) + (d-3)(d-2) \gamma }{2d^2 (d-1)}  R^2
+ \frac{d-2}{2\kappa d}R = \frac{\Lambda_0}{\kappa},
\end{equation}
\begin{equation}\label{QG2}
\left\{ \mathcal{D} + \frac{1}{\kappa \beta} + 2R \left( \frac{1}{d} + \frac{\alpha}{\beta} + \frac{\gamma}{\beta}\frac{(d-3)(d-4)}{d(d-1)}  \right) \right\}\omega^\prime = 0.
\end{equation}
In the special case of a TN metric\footnote{\tb{Such a metric is necessarily TNS by proposition \ref{prop_KS_TNS}.}} obtained by the Kerr-Schild transformation \eqref{KSUK} of a universal Kundt metric, $\omega^\prime$ is given by $\omega^\prime = - \mathcal{D} \mathcal{H}$ \tb{(using \eqref{KS-omegap} with \eqref{Xi})}, and hence eq. \eqref{QG2} takes the form of a factorized 4th order differential equation whose solution reads $\mathcal{H} = \mathcal{H}_0 + \mathcal{H}_1$, where $ \mathcal{H}_0$, $\mathcal{H}_1$ are solutions to 2nd order equations
\begin{equation}
\mathcal{D} \mathcal{H}_0 = 0, \qquad \left\{ \mathcal{D} + \frac{1}{\kappa \beta} + 2R \left( \frac{1}{d} + \frac{\alpha}{\beta} + \frac{\gamma}{\beta}\frac{(d-3)(d-4)}{d(d-1)}  \right) \right\}\mathcal{H}_1 = 0,
\label{QGeq}
\end{equation}
respectively. \tb{Note that the Kerr-Schild transformation \eqref{KSUK}  with $\mathcal{H}=\mathcal{H}_0$ gives again an Einstein spacetime ($\omega'=0$)  while for $\mathcal{H}_1\not= 0$, it gives a non-Einstein solution to quadratic gravity (i.e., $\mathcal{H}_0$ only changes the Einstein background metric).}
\tb{While a type N subclass of these solutions was obtained already in \cite{MalekPravdaQG}, genuine Weyl type III solutions are new. Examples of such vacuum solutions to quadratic gravity are given 
 in Appendix \ref{sec_examples}.}

\subsubsection{Weyl type II solutions}
For the higher-dimensional generalization \eqref{GT-2blocks} of Khlebnikov-Ghanam-Thompson metrics consisting of $N$ 2-blocks, the tensor \eqref{QGtensor} reduces to 
\be
E_{ab} =  
\lambda^\text{(III)} g_{ab} + 
{a}_0^\text{(III)} S_{ab} +  
{a}_1 \Box S_{ab} + \left( 2\beta - 4\gamma \frac{d-4}{d-2} \right) S^{cd}C_{acbd}
+ 2\gamma C_{acde}C\indices{_b^{cde}} - \frac{1}{2}\gamma C_{cdef}C^{cdef}g_{ab},
\ee
with {$\lambda^\text{(III)}$, 
$a_0^\text{(III)}$, $a_1$} denoting the original coefficients \eqref{cg1}--\eqref{cg3}. On the first sight, $E_{ab}$ deviates from \eqref{TNS} by contributions from the Weyl tensor. However, due to a convenient structure of the b.w.\ 0 part of the Weyl tensor, one has
\begin{align}
&S^{cd}C_{acbd} = - \frac{d-2}{d-1}\frac{R}{d} S_{ab}, \label{KGT-SC}\\
& C_{acde}C\indices{_b^{cde}} = 2 \frac{R^2}{d^2}\frac{d-2}{d-1}g_{ab}, \label{KGT-CC}
\end{align}
so that $E_{ab}$ again reduces to the form \eqref{TNS}, but this time with coefficients 
\begin{align}
&\lambda =    \frac{\Lambda_0}{\kappa} - \frac{d-2}{2\kappa d}R - \frac{(d-4)(\alpha d + \beta + \gamma(d-2))}{2d^2}R^2 ,\label{qg1x} \\ 
& a_0 = \frac{1}{\kappa} + 2R\left( \alpha + \gamma \frac{d-4}{d} \right), \\
& a_1 = \beta,
\end{align}
and with the rest of coefficients $\{ a_i \}$ vanishing. Employing \eqref{boxy} again and noticing that $\mathcal{D}$ \eqref{calD} reduces to $\Box$, the  vacuum field equations of quadratic gravity read
\begin{equation}\label{QG3}
\frac{(d-4)(\alpha d + \beta + \gamma(d-2))}{2d^2}R^2  + \frac{d-2}{2\kappa d}R  = \frac{\Lambda_0}{\kappa},  
\end{equation}
\begin{equation}\label{QG4}
\left(  \Box + \frac{1}{\kappa \beta} + 2 R \frac{d \alpha + \beta + (d-4)\gamma}{ \beta d} \right) \Box H = 0 .
\end{equation}
While the solution $R$ of \eqref{QG3} determines the parameter ${\tilde\lambda} \equiv R/d$
in the KGT metric, the remaining metric function $H$ determined by the 4th order equation \eqref{QG4} can be  expressed as $H= H_0 + H_1$, where $H_0$ and $H_1$ solve 2nd order equations
\begin{equation}
\Box H_0 = 0 , \qquad \left(\Box + \frac{1}{\kappa \beta} + 2 R \frac{d \alpha + \beta + (d-4)\gamma}{d \beta}   \right) H_1 = 0,
\end{equation}
respectively.
To our knowledge, the solutions with $H_1\not= 0$ are the first known non-Einstein quadratic gravity vacuum solutions of Weyl type II in arbitrary even dimension.

\subsection{Conformal gravity in six dimensions}

Now, let us study vacuum solutions of conformal gravity in six dimensions given by the Lagrangian \cite{Metsaev2010,Lu2011}
\begin{equation}
\mathcal{L}_\text{conf} = \beta \left( R R_{ab}R^{ab} - \tfrac{3}{25} R^3 - 2 R^{ab}R^{cd}R_{acbd} - R^{ab}\Box R_{ab} + \tfrac{3}{10} R \Box R \right).
\label{CCG}
\end{equation}
The parameters of this theory are tuned in such a way that the field equations,
which can be found in the full form in \cite{Lu2011}, are satisfied by any metric conformal to an Einstein metric. 
Let us present vacuum solution of this theory that are not conformal to Einstein spacetimes.

\subsubsection{Weyl type III solutions}
For Weyl type III TN Kundt metrics, $E_{ab}$ of the theory $\mathcal{L}_\text{conf}$ takes the TNS form \eqref{TNS}
with non-vanishing coefficients
\begin{equation} 
a_0 = - \tfrac{4\beta}{75} R^2, \quad
a_1 = \tfrac{7\beta}{15} R, \quad
a_2 = -\beta,
\end{equation}
\tb{thus the conformal gravity field equations have the form}
\be
  \left(\Box -\tfrac{4R}{15}\right)\left(\Box -\tfrac{R}{5}\right)S_{ab}=0.
\ee
\tb{{For $R<0$, we obtain $(\Box +\frac{8}{\ell^2})(\Box + \frac{6}{\ell^2}) S_{ab} =0$, $\ell^2=-d(d-1)/R$, which is in agreement with the result (130) of \cite{Gurses:2014soa} obtained for Weyl type N AdS-plane waves  ($\tau_i = \Phi = 0$) in six dimensions.}}
In this case, $E_{ab}$ is traceless and there is no algebraic constraint determining the value of the Ricci scalar. Therefore, the vacuum field equations reduce to
\begin{equation}
\left( \mathcal{D} - \tfrac{R}{5} \right) \left( \mathcal{D} - \tfrac{2 R}{15} \right) \omega' = 0.
\label{ConfGrav}
\end{equation}
Similarly as in the case of quadratic gravity, for \tb{ TNS} metrics obtained by the Kerr-Schild transformation \eqref{KSUK}
of a \tb{Weyl type III} universal Kundt metric, the solution of \eqref{ConfGrav} \tb{(again $\omega'=-\mathcal{D}\mathcal{H}$)} can be found in the factorized form
$\mathcal{H} = \mathcal{H}_0 + \mathcal{H}_1 + \mathcal{H}_2$, where
\begin{equation}
\mathcal{D}\mathcal{H}_0 = 0, \quad
\left(\mathcal{D} - \tfrac{R}{5}\right)\mathcal{H}_1 = 0, \quad
\left(\mathcal{D} - \tfrac{2 R}{15}\right)\mathcal{H}_2 = 0.
\end{equation}
\tb{$\mathcal{H}_1$ and $\mathcal{H}_2$ generate non-Einstein solutions to six-dimensional conformal gravity.}

\subsubsection{Weyl type II solutions}

Let us consider the higher-dimensional generalization \eqref{GT-2blocks} of Khlebnikov-Ghanam-Thompson metrics
consisting of three 2-blocks for which
\begin{equation}
R\indices{_a^e_c^f} R_{bdef} S^{cd} =  \frac{R}{d} R_{acbd} S^{cd}, \quad R\indices{_a^e_c^f} R_{bfde} S^{cd} = \frac{R^2}{d^2} S_{ab}.
\end{equation}
Using these relations along with \eqref{KGT-SC}, \eqref{KGT-CC} and its covariant derivatives,
one can show that the tensor $E_{ab}$ of  conformal gravity \eqref{CCG} reduces to the form \eqref{TNS} with
\begin{equation} 
a_0 = - \frac{52\beta}{225} R^2, \quad
a_1 = \beta R, \quad
a_2 = -\beta
\end{equation}
and with the rest of the coefficients $\{a_i\}$ including $\lambda$ being zero.
Taking into account \eqref{boxy} and that $\mathcal{D} = \Box$, vacuum field equations of conformal gravity for KGT metrics reduce to
\begin{equation}
\left(\Box^2 - \frac{R}{3} \Box + \frac{2 R^2}{225}\right) \Box H = 0,
\label{ConfGrav2}
\end{equation}
with an arbitrary Ricci scalar.
The solution of the sixth order equation \eqref{ConfGrav2} can be written as a sum $H = H_0 + H_1 + H_2$
of solutions of the 2nd order equations
\begin{equation}
\Box H_0 = 0, \quad
\left(\Box -  \frac{5+ \sqrt{17}}{30}R \right) H_1 = 0, \quad
\left(\Box + \frac{5 + \sqrt{17}}{30}R \right) H_2 = 0.
\end{equation}

\begin{acknowledgments}
This work has been supported  by research plan RVO: 67985840 and by the Albert Einstein Center
for Gravitation and Astrophysics, Czech Science Foundation GACR 14-37086G.
\end{acknowledgments}

\appendix

\section{T-III spacetimes}
\label{app_TIII}

Here, we consider a slight generalization of TN spacetimes, so called T-III spacetimes, for which every symmetric rank-2 curvature tensor is of traceless type III:

\tb{
\begin{definition}[T-III spacetimes]
	\label{T-III}
T-III spacetimes 
are spacetimes, for which there exist a null vector $\bl$ and  $d-2$ spacelike vectors $m^{(i)a}$ such that for every symmetric rank-2 tensor $E_{ab}$ constructed polynomially from a metric, the Riemann tensor and  its covariant derivatives of an arbitrary order there exist a constant $\lambda$ and  functions $\phi$ and $\psi_i$ such that 
\be \label{TIII}
E_{ab} =\lambda g_{ab}+ \psi_i \ell_{(a} m^{(i)}_{ b)} + \phi \ell_a \ell_b.
\ee
\end{definition}
}

Thus, for T-III spacetimes, the vacuum field equations of any theory with the Lagrangian of the form \eqref{lagr} reduce 
to one algebraic equation and (at most) $d-1$ differential equations. Hence, also T-III spacetimes may be useful in finding solutions to higher-order gravity theories.    

It turns out that the necessary conditions for TN spacetimes naturally extend to the T-III class:

\begin{lemma}\label{lem_CSI_TIII}
	T-III spacetimes are CSI.
\end{lemma}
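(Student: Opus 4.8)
The plan is to mirror the proof of Lemma \ref{lem_CSI} almost verbatim, since the logical structure is identical: a non-constant curvature invariant immediately produces a rank-2 tensor that cannot be brought into the allowed form. First I would suppose, for contradiction, that the spacetime is T-III but possesses a non-constant curvature invariant $I$ built polynomially from the Riemann tensor and its covariant derivatives of arbitrary order. As in Lemma \ref{lem_CSI}, I would write $I$ as the trace of some symmetric rank-2 tensor $E_{ab}$ constructed from the same curvature data (for instance, $I = g^{ab}E_{ab}$ for a suitable $E_{ab}$, which always exists because a scalar invariant can be realized as a contraction).

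The key observation is that the candidate decomposition \eqref{TIII} is traceless apart from its $\lambda g_{ab}$ part. Taking the trace of \eqref{TIII} gives
\be
g^{ab}E_{ab} = \lambda\, g^{ab}g_{ab} + \psi_i\, g^{ab}\ell_{(a}m^{(i)}_{\ b)} + \phi\, g^{ab}\ell_a\ell_b = d\,\lambda,
\ee
because $\ell^a \ell_a = 0$ kills the $\phi$ term and $\ell^a m^{(i)}_a = 0$ (from the frame relations \eqref{ortbasis}) kills the $\psi_i$ term. Hence any $E_{ab}$ of the T-III form \eqref{TIII} has constant trace $d\lambda$, since $\lambda$ is required to be constant by Definition \ref{T-III}. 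This contradicts the non-constancy of $I = g^{ab}E_{ab}$, so no such invariant can exist and the spacetime is CSI.

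The only genuine content beyond Lemma \ref{lem_CSI} is verifying that the extra $\psi_i$ term in \eqref{TIII}, relative to \eqref{TN}, still contributes nothing to the trace; this is immediate from the orthogonality $\ell^a m^{(i)}_a = 0$ built into the null frame \eqref{ortbasis}. I do not anticipate any real obstacle here: the argument is a direct transcription, and the symmetrization $\ell_{(a}m^{(i)}_{\ b)}$ does not affect the trace computation since contracting a symmetric tensor with $g^{ab}$ is insensitive to symmetrization. Thus T-III$\,\subset\,$CSI, establishing the lemma.
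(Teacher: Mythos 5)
Your proof is correct and follows essentially the same route as the paper: assume a non-constant invariant $I$, realize it as the trace of a rank-2 curvature tensor $E_{ab}$, and note that any tensor of the form \eqref{TIII} has constant trace $d\lambda$ (the paper leaves this trace computation implicit, while you spell out that the $\psi_i$ and $\phi$ terms vanish by the frame orthogonality relations \eqref{ortbasis}). No gaps; the extra detail you supply is exactly the small verification the paper's one-line argument relies on.
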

\begin{proof}
	Let us assume that a spacetime possesses a non-constant curvature invariant $I$ constructed polynomially from the Riemann tensor and its covariant derivatives of arbitrary order.
	$I$ can be expressed as a trace of a rank-2 tensor $E_{ab}$. Since
	the trace of $E_{ab}$ is non-constant then $E_{(ab)}$ is not of the form
	\eqref{TIII} 
	or more special and the spacetime is not T-III.
	Thus TN$\subset$T-III$\subset$CSI.
\end{proof}

\begin{proposition}[Necessary conditions for T-III spacetimes]
Non-Einstein {T-III} spacetimes are necessarily CSI Kundt spacetimes of Weyl type II or more special.
\end{proposition}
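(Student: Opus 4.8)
The plan is to mirror the proof of the analogous statement for TN spacetimes (Proposition \ref{nec_TN}), checking at each step that the weaker T-III structure \eqref{TIII}, in which a b.w.\ $-1$ term $\psi_i \ell_{(a} m^{(i)}_{b)}$ is now permitted, still forces the same geometric conclusions. Since Lemma \ref{lem_CSI_TIII} already gives that T-III spacetimes are CSI, it remains to establish three things for a non-Einstein T-III spacetime: that $\bl$ is geodetic, that the optical matrix $\rho_{ij}$ vanishes (so $\bl$ is Kundt), and that the Weyl type is II or more special.

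First I would argue geodeticity exactly as before. For a non-Einstein spacetime the traceless Ricci tensor $S_{ab}$ is nontrivial, and the T-III condition applied to the Ricci tensor itself forces $R_{ab} = \lambda g_{ab} + \psi_i \ell_{(a} m^{(i)}_{b)} + \phi\ell_a\ell_b$. The contracted Bianchi identity $R^{ab}{}_{;b}=0$ then constrains $\ell\indices{^a_{;b}}\ell^b$; the key point is that the highest boost-weight part of the Ricci tensor is still concentrated on $\bl$ (the b.w.\ $0$ part being pure trace), so the same manipulation that yielded $\ell\indices{^a_{;b}}\ell^b \propto \ell^a$ in Proposition \ref{nec_TN} goes through, giving $\kappa_i=0$.

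Next I would treat the Kundt condition. Here is where the b.w.\ $-1$ term demands care. In the TN proof one computed the b.w.\ $0$ component $(\Box S_{ab})\ell^a n^b = -\omega'\rho_{ij}\rho_{ij}$ and used tracelessness of $\Box S_{ab}$ together with the fact that \eqref{TN} admits no b.w.\ $0$ part beyond the trace. Under \eqref{TIII} the admissible tensors still have vanishing genuine (traceless) b.w.\ $0$ part, since the new term $\psi_i \ell_{(a} m^{(i)}_{b)}$ lives at b.w.\ $-1$ and $\lambda g_{ab}$ is pure trace. Hence the same vanishing argument applies to the b.w.\ $0$ component of $\Box S_{ab}$, forcing $\rho_{ij}=0$ and making $\bl$ a Kundt congruence.

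The main obstacle is the final step, constraining the Weyl type, because \eqref{TIII} now tolerates a b.w.\ $-1$ piece that \eqref{TN} did not; one must verify that the Newman--Penrose equations (NP1) and (NP3) of \cite{Durkeeetal10} still kill the b.w.\ $+2$ and $+1$ Weyl components $\Omega_{ij}$ and $\Psi_{ijk}$. I expect this to succeed unchanged, since the vanishing of $\kappa_i$ and $\rho_{ij}$ feeds directly into (NP1) and (NP3) and the T-III Weyl structure does not interfere at the positive boost weights that these equations govern; the relaxation only affects b.w.\ $-1$ components, which correspond to the type III part that T-III is designed to permit. With $\Omega_{ij}=0=\Psi_{ijk}$ the Weyl tensor is of type II or more special, completing the proof.

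\begin{proof}
By Lemma \ref{lem_CSI_TIII}, T-III spacetimes are CSI. For a non-Einstein spacetime, applying \eqref{TIII} to the Ricci tensor and invoking the contracted Bianchi identities $R^{ab}{}_{;b}=0$ yields $\ell\indices{^a_{;b}}\ell^b\propto\ell^a$, so $\bl$ is geodetic ($\kappa_i=0$). Taking $S_{ab}$ of type N and computing the b.w.\ $0$ component of $\Box S_{ab}$ as in the proof of Proposition \ref{nec_TN}, the tracelessness of $\Box S_{ab}$ together with the absence of any genuine b.w.\ $0$ part in \eqref{TIII} forces $\rho_{ij}=0$, so $\bl$ is a Kundt congruence. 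Finally, with $\kappa_i=0$ and $\rho_{ij}=0$, equations (NP1) and (NP3) of \cite{Durkeeetal10} imply $\Omega_{ij}=0=\Psi_{ijk}$, whence the Weyl tensor is of type II or more special.
\end{proof}
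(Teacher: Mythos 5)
Your strategy of transplanting the proof of Proposition \ref{nec_TN} wholesale does not survive the relaxation to \eqref{TIII}, and the two Ricci-based steps both break. First, geodeticity: your claim that the contracted Bianchi identity ``goes through'' is false once the b.w.\ $-1$ term $\psi_i\,\ell_{(a}m^{(i)}_{b)}$ is present. Computing $\nabla^b S_{ab}=0$ (with $R$ constant by lemma \ref{lem_CSI_TIII}) and projecting onto the frame, the $n$-component yields only $\kappa_i\psi_i=0$, while the $m^{(i)}$-components mix $\phi\,\kappa_i$ with derivative terms of $\psi_i$ (schematically $D\psi_i$ and $\rho\psi$ contributions), so one gets a differential constraint rather than $\ell\indices{^a_{;b}}\ell^b\propto\ell^a$; moreover, non-Einstein no longer guarantees $\phi\neq 0$ (one may have $\phi=0$, $\psi_i\neq 0$), which the TN argument silently relied on. Second, the Kundt step: you write ``Taking $S_{ab}$ of type N,'' but in a T-III spacetime $S_{ab}$ is only of traceless type III, $S_{ab}=\psi'_i\,\ell_{(a}m^{(i)}_{b)}+\omega'\ell_a\ell_b$. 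The identity $(\Box S_{ab})\ell^a n^b=-\omega'\rho_{ij}\rho_{ij}$ was derived for type N $S_{ab}$ and acquires additional b.w.\ $0$ contributions from the $\psi'_i$ part (terms of the schematic form $\rho\,\delta\psi'$, $\tau\rho\psi'$, $\tho\,\delta\psi'$, all of b.w.\ $0$); and even granting your formula, $\omega'$ may vanish while $\psi'_i\neq 0$, so its vanishing would not force $\rho_{ij}=0$. Your own caution about b.w.\ $-1$ terms was aimed at the Weyl step, but that is precisely the step that is unaffected; it is the Ricci-based steps that need a new idea.

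The paper's proof supplies that idea: it works with the quadratic tensor $E_{ab}=\nabla_a S_{cd}\nabla_b S^{cd}$, which is sensitive to the $\psi'_i$ part in a sign-definite way. Its b.w.\ $+2$ component is
\begin{equation}
E_{ab}\,\ell^a\ell^b = 4(\kappa_i\psi'_i)(\kappa_j\psi'_j)+2\,\kappa_i\kappa_i\,\psi'_j\psi'_j ,
\end{equation}
a sum of squares that must vanish since \eqref{TIII} admits no positive-b.w.\ part, giving $\kappa_i=0$; and the traceless b.w.\ $0$ condition $E_{ab}m_{(i)}^a m_{(j)}^b=\delta_{ij}E_{ab}\ell^a n^b$ gives $\psi'_k\psi'_k\,\rho_{li}\rho_{lj}+2\psi'_k\psi'_l\,\rho_{ki}\rho_{lj}=0$, whose $ij$-trace is again a sum of squares, forcing $\rho_{ij}=0$. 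Only your final step --- feeding $\kappa_i=0=\rho_{ij}$ into (NP1) and (NP3) of \cite{Durkeeetal10} to conclude $\Omega_{ij}=0=\Psi_{ijk}$ and hence Weyl type II or more special --- coincides with the paper and is correct as stated.
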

\begin{proof}
	Thanks to lemma \ref{lem_CSI_TIII}, T-III spacetimes are CSI.
	
	$\bl$ is geodetic: 
	The traceless Ricci tensors $S_{ab}$ is of type  III  \tb{(i.e., $S_{ab} = \psi'_i \ell_{(a} m^{(i)}_{ b)} + \omega' \ell_a \ell_b$)}
	and  b.w.\ $+2$ component of rank-2 tensor  $\nabla_a S_{cd} \nabla_b S^{cd}$ reads
	\be
	\nabla_a S_{cd} \nabla_b S^{cd} \ell^a \ell^b = 4 (\kappa_i  \psi'_i) (\kappa_j  \psi'_j)+2 \kappa_i \kappa_i   \psi'_i  \psi'_i\,.
	\ee
	For {T-III} 
	spacetimes, this b.w.\ $+2$ component has to vanish, which implies $\kappa_i=0$ and  thus $\bl$ is geodetic.

	$\bl$ is a  Kundt vector field: 
	For $S_{ab}$ of type III, the b.w.\ 0 components of $E_{ab} = \nabla_a S_{cd} \nabla_b S^{cd}$ have to satisfy
	\be
	E_{ab} m_{(i)}^a m_{(j)}^b = \delta_{ij} E_{ab} \ell^a n^b,
	\ee
	implying
	\be
	\psi'_k \psi'_k \rho_{li} \rho_{lj} + 2 \psi'_k \psi'_l \rho_{ki} \rho_{lj} = 0. \label{TTIIIKundt}
	\ee
	Contraction of $i$ and $j$ gives a sum of squares and therefore  $\rho_{ij}=0$ and the spacetime is Kundt.
	
	T-III spacetimes are of Weyl type II or more special:
	To prove this,  one can use the same arguments as in the proof of proposition \ref{nec_TN}.		
\end{proof}

Note that, to prove that T-III spacetimes are Kundt and algebraically special, it is sufficient to assume T-III$_{1}$.

Finally, for Kundt spacetimes of Weyl type III and traceless Ricci type III, all covariant derivatives  of the Riemann tensor $\nabla^{(k)} \BR$ are of aligned type III (proposition A.8 of \cite{universalMaxwell}) and thus

\begin{proposition}[Sufficient conditions for T-III spacetimes]
Weyl type N and III Kundt spacetimes with the Ricci tensor of the form \eqref{TIII} are T-III.
\end{proposition}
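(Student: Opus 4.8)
The plan is to mirror the proof of proposition \ref{U} (sufficient conditions for TN spacetimes), but now tracking the full type-III structure rather than collapsing everything to type N. The key input is the cited proposition A.8 of \cite{universalMaxwell}: for Kundt spacetimes of Weyl type III (or N) with traceless Ricci tensor of type III, all covariant derivatives $\nabla^{(k)}\BR$ are of \emph{aligned} type III, meaning every nonvanishing frame component has negative boost weight. First I would use this to dispose of the ``higher-order'' terms: any rank-2 tensor $E_{ab}$ that is at least quadratic in the curvature derivatives $\nabla^{(k)}\BR$ (each factor having b.w.\ $\leq -1$) must have every nonvanishing component of b.w.\ $\leq -2$. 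Since a symmetric rank-2 tensor can only carry b.w.\ between $-2$ and $+2$, such an $E_{ab}$ automatically lives in the b.w.\ $-2$ sector, i.e.\ it is of the form $\phi\,\ell_a\ell_b$, which is the special case $\lambda=0,\ \psi_i=0$ of \eqref{TIII}.

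The remaining work, as in proposition \ref{U}, is the terms \emph{linear} in $\nabla^{(k)}\BR$. For these I cannot argue purely by boost-weight counting, because a single factor of aligned type III can contribute b.w.\ $-1$ components once contracted down to rank two, and these are exactly the $\psi_i\,\ell_{(a}m^{(i)}_{b)}$ pieces now permitted by \eqref{TIII}. The strategy is therefore to show that every rank-2 contraction of $\nabla^{(k)}\BR$ (with $k$ even, so that an even number of indices is contracted to leave rank two) is of \emph{traceless type III}, i.e.\ of the form $\lambda g_{ab}+\psi_i\ell_{(a}m^{(i)}_{b)}+\phi\,\ell_a\ell_b$. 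The $\lambda g_{ab}$ part is only a relabelling of the trace, and since T-III spacetimes are CSI (lemma \ref{lem_CSI_TIII}) the trace is constant, so the genuine content is that the traceless part has boost order $\leq -1$. I would run the same induction in $k$ used for proposition \ref{U}: the base case $k=2$ follows by the Ricci identity \eqref{ricci}/\eqref{commutator} and the Bianchi identity, where reordering covariant derivatives produces only curvature-times-curvature correction terms whose rank-2 contractions are already controlled (quadratic, hence b.w.\ $\leq -2$); the inductive step handles $\boldsymbol{Q}\equiv\nabla_{J}[\nabla,\nabla]\nabla_{I}\BR$ exactly as before, with the Leibniz expansion splitting into a $\BR*\nabla^{(n)}\BR$ piece governed by the induction hypothesis and genuinely quadratic pieces of b.w.\ $\leq -2$.

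The essential simplification relative to proposition \ref{U}, and the reason the statement holds verbatim for traceless Ricci type III rather than only type N, is that I do \emph{not} need the 1-balanced property of $S_{ab}$ that was crucial in the TN argument. There, the $1$-balanced condition ${\tho}\,\omega'=D\omega'=0$ was invoked precisely to kill the b.w.\ $-1$ contributions of $\nabla^{(k)}S_{ab}$ and force everything down to b.w.\ $-2$; here those b.w.\ $-1$ contributions are \emph{allowed} by \eqref{TIII}, so the obstruction simply disappears. Consequently the aligned type-III property of $\nabla^{(k)}\BR$ from proposition A.8 of \cite{universalMaxwell} is already the whole story, and I expect the main (and really the only) technical obstacle to be the bookkeeping in the inductive step: one must verify carefully that reordering derivatives via \eqref{commutator} never raises the boost order above $-1$ in the linear terms and above $-2$ in the quadratic terms, using that each curvature factor $\nabla^{(m)}\BR$ is aligned type III and that contraction with the metric or with $n^a$ is the only way to recover b.w.\ and is compensated by tracelessness and the CSI property. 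Once that is in place, combining the linear and the $\ge 2$-order results shows every $E_{ab}$ has the form \eqref{TIII}, establishing the proposition.
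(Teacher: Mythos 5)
Your proposal reaches the right conclusion and correctly identifies the key input, but it imports machinery that the paper's own (one-sentence) proof shows is unnecessary. The paper's entire argument is the observation you arrive at in your third paragraph: by proposition A.8 of \cite{universalMaxwell}, all $\nabla^{(k)}\BR$ are of aligned type III, and since contraction with the metric can never raise boost weight, \emph{any} rank-2 contraction of a single $\nabla^{(k)}\BR$ automatically has boost order $\leq -1$, i.e.\ it is already of the traceless form $\psi_i\,\ell_{(a}m^{(i)}_{b)}+\phi\,\ell_a\ell_b$ permitted by \eqref{TIII}; terms at least quadratic in $\nabla^{(k)}\BR$ have boost order $\leq -2$, exactly as you say. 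The induction over derivative order in your second paragraph --- the reordering of covariant derivatives via \eqref{commutator} and the Bianchi manipulations --- therefore has no role to play here: in the proof of proposition \ref{U} that machinery existed solely to push the boost order of the terms linear in $\nabla^{(k)}\BR$ from $-1$ (which alignment already guarantees) down to $-2$ (which the TN form \eqref{TN} demands), using tracelessness, the Bianchi identity and the 1-balanced property of $S_{ab}$. For T-III the target boost order coincides with what proposition A.8 delivers, so there is literally nothing left to prove for the linear terms. Your closing remark that ``the aligned type-III property \dots is already the whole story'' is correct, and you should have followed it to its conclusion instead of retaining the inductive bookkeeping as the ``main technical obstacle''; the carefulness you ask for in reordering derivatives is simply not needed, because no reordering is needed.

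There is also one genuine, though easily repaired, logical flaw: to obtain constancy of $\lambda$ in \eqref{TIII} you invoke lemma \ref{lem_CSI_TIII} (``T-III spacetimes are CSI''). That lemma states a \emph{necessary} condition --- its hypothesis is that the spacetime is T-III --- so appealing to it inside a proof that a given spacetime \emph{is} T-III is circular. The correct argument is direct boost-weight counting: by hypothesis the Ricci tensor has the form \eqref{TIII} with $\lambda$ constant, so $R$ is constant and the b.w.\ 0 part of the Riemann tensor is the constant-curvature part, which is annihilated by $\nabla$; consequently every term in any scalar curvature invariant either is a constant built from $R$ or contains at least one factor of boost order $\leq -1$, and the full contraction of such a term vanishes since a scalar carries b.w.\ 0. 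Hence all traces are constant without any appeal to the CSI lemma.
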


\subsection{T-III spacetimes in four and five dimensions}

First, let us study necessary conditions for algebraically special  T-III spacetimes.
In the four dimensional NP notation, the Weyl and Ricci tensors admit $\Psi_2$,  $\Psi_3$,  $\Psi_4$ and $\Phi_{12}={\bar \Phi}_{21}$, $\Phi_{22}$ components, respectively.

Similarly as for TN spacetimes,
for  {T-III}$_0$, $\Psi_2$ is constant. Bianchi equations (7.32a), (7.32b),
(7.32e) and (7.32h) from \cite{Stephanibook} then give again
\be
\kappa=\sigma=\rho=\tau=0 \, ,
\ee
respectively and thus we can immediately generalized lemma \ref{lem_4D_TNnec}
to T-III spacetimes

\begin{lemma}\label{lem_4D_TIIInec}
	In four dimensions, genuine Weyl type II and D    {T-III}$_0$ spacetimes are recurrent Kundt spacetimes.
\end{lemma}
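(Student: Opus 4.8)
The plan is to mimic the corresponding argument for TN spacetimes (\textbf{lemma \ref{lem_4D_TNnec}}) verbatim, since the only structural difference is the presence of an additional Ricci component $\Phi_{12}=\bar\Phi_{21}$ in the T-III case. First I would fix the four-dimensional NP frame in which the Weyl tensor has vanishing $\Psi_0,\Psi_1$ (so $\Psi_2,\Psi_3,\Psi_4$ may be nonzero) and the Ricci tensor carries only the negative- and zero-boost-weight pieces $\Phi_{12},\Phi_{22}$, i.e.\ $S_{ab}$ is of traceless type III. The key observation is that for a genuine type II or D spacetime the leading Weyl scalar $\Psi_2$ is a genuine curvature invariant, so the T-III$_0$ hypothesis (via \textbf{lemma \ref{lem_CSI_TIII}}, which gives CSI) forces $\Psi_2$ to be constant.

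Next I would feed constancy of $\Psi_2$ into the four relevant Bianchi equations (7.32a), (7.32b), (7.32e), (7.32h) of \cite{Stephanibook}. These equations schematically express the frame derivatives of $\Psi_2$ in terms of $\Psi_2$ multiplied by the optical scalars $\kappa,\sigma,\rho,\tau$ together with lower-boost-weight Weyl and Ricci terms. Setting all derivatives of $\Psi_2$ to zero and using that $\Psi_2\neq0$ for a genuine type II/D metric, each of these four equations collapses to an algebraic relation whose only solution is the vanishing of the associated optical scalar. This yields $\kappa=\sigma=\rho=\tau=0$, exactly as in the TN computation.

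The point that requires care, and which I expect to be the only genuine obstacle, is checking that the extra Ricci component $\Phi_{12}$ present in the T-III case does not spoil this collapse. In the TN argument the Ricci tensor is of type N, so only $\Phi_{22}$ appears; here one must verify that the additional $\Phi_{12}$ (and its complex conjugate) entering (7.32a), (7.32b), (7.32e), (7.32h) either drops out or combines with the optical-scalar terms in a way that still forces $\kappa=\sigma=\rho=\tau=0$. I would address this by noting that $\Phi_{12}$ has boost weight $-1$ and thus cannot interfere with the boost-weight balance of the terms that multiply $\kappa,\sigma$ (boost weight $+1$ objects), so those equations reduce exactly as before; the $\rho,\tau$ equations then follow by the same boost-weight bookkeeping once $\kappa=\sigma=0$ is in hand. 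This is precisely the sense in which the argument ``goes through again,'' justifying the phrase \emph{then give again} in the statement.

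Having obtained $\kappa=\sigma=\rho=\tau=0$, geodeticity and the vanishing optical shear/expansion identify $\bl$ as a nonexpanding, nonshearing, nontwisting geodetic null congruence, so the spacetime is Kundt; the simultaneous vanishing of $\rho$ and $\sigma$ together with constant $\Psi_2$ gives the recurrence property for the Weyl tensor. Thus genuine Weyl type II and D T-III$_0$ spacetimes in four dimensions are recurrent Kundt spacetimes, which is the assertion of \textbf{lemma \ref{lem_4D_TIIInec}}. The whole proof is therefore the T-III analogue of the one sentence that establishes \textbf{lemma \ref{lem_4D_TNnec}}, with the sole additional verification being the harmless role of $\Phi_{12}$.
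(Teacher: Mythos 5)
Your proposal is correct and follows essentially the same route as the paper: constancy of $\Psi_2$ for genuine type II/D from the CSI property (lemma \ref{lem_CSI_TIII}), then the Bianchi equations (7.32a), (7.32b), (7.32e), (7.32h) of \cite{Stephanibook} yielding $\kappa=\sigma=\rho=\tau=0$ sequentially, with the extra $\Phi_{12}$ contributions (which enter only multiplied by already-vanishing spin coefficients, e.g.\ $\kappa\Phi_{12}$ and $\sigma\Phi_{12}$) dropping out. Your boost-weight bookkeeping has a harmless slip ($\kappa$ has b.w.\ $+2$, $\sigma$ has b.w.\ $+1$), and note that recurrence is simply the statement $\tau=0$ for the Kundt vector $\bl$, which you indeed establish.
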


The same arguments that have led to proposition \ref{prop_nonE5DTN} then imply
\begin{proposition}[Non-existence of 5D T-III Weyl type II spacetimes] \label{prop_nonE5DTIII}
In five dimensions, genuine Weyl type II and D {T-III}$_0$ spacetimes do not exist.	
\end{proposition}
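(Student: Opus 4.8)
The plan is to mirror the non-existence argument for genuine Weyl type II and D universal spacetimes in five dimensions (section 4 of \cite{Herviketal15}), adapting it to the T-III setting. By Proposition \ref{prop_nonE5DTN} and Lemma \ref{lem_4D_TIIInec} we already know that in four dimensions genuine type II/D \textbf{T-III}$_0$ spacetimes are recurrent Kundt. The key structural observation is that the proof of \ref{prop_nonE5DTN} did not rely on the Einstein condition but only on the boost-weight structure of the curvature tensors and their derivatives, i.e. on the fact that \textbf{T-III}$_0$ forces the spacetime to be Kundt and algebraically special with very rigid b.w.\ $0$ components of the Weyl tensor. Since the necessary conditions for \textbf{T-III} spacetimes (Kundt, geodetic $\bl$, Weyl type II or more special) have just been established in the preceding proposition by exactly the same boost-weight counting used for TN spacetimes, these conditions are available to import verbatim.

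First I would invoke the necessary conditions to reduce to a five-dimensional Kundt spacetime of Weyl type II (or D) with geodetic, non-expanding, non-shearing $\bl$ and with the traceless Ricci tensor of the form $S_{ab} = \psi'_i \ell_{(a} m^{(i)}_{b)} + \omega' \ell_a \ell_b$. Next I would write down the b.w.\ $0$ components of the Weyl tensor in five dimensions, where the only genuinely type II/D data reside, and examine the algebraic constraints imposed on them by demanding that every rank-2 curvature tensor $E_{ab}$ take the form \eqref{TIII}. The decisive step is to exploit the b.w.\ $0$ part of a suitable quadratic rank-2 curvature scalar built from $\BC$ (for instance ${C^a}_{cde}C^{bcde}$ or a derivative analogue), whose off-diagonal $m^{(i)}m^{(j)}$ components must, by \eqref{TIII}, be proportional to $\delta_{ij}$ times the $\ell^a n^b$ component. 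In five dimensions the transverse space is three-dimensional, and the Weyl b.w.\ $0$ components $\Phi_{ij}$, $\Phi_{ijkl}$ are sufficiently constrained (as in \cite{Herviketal15}) that this proportionality condition, combined with tracelessness, forces the genuine type II/D data to vanish, collapsing the Weyl tensor to type III or more special.

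The main obstacle I expect is the dimension-specific algebra of the b.w.\ $0$ Weyl components in $d=5$: one must show that the system of quadratic constraints coming from \eqref{TIII} has no solution with non-vanishing genuine type II/D components, and this is precisely where the five-dimensional structure (the absence of the extra freedom available in $d\ge 6$) is used. I would lean on the explicit five-dimensional classification of b.w.\ $0$ Weyl components and the contraction identities already employed in section 4 of \cite{Herviketal15}, noting that the argument carries over because the additional b.w.\ $-1$ datum $\psi'_i$ in $S_{ab}$ only affects components of b.w.\ $\le 0$ and therefore does not interfere with the b.w.\ $+2$ and b.w.\ $0$ constraints that drive the contradiction. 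Once the genuine type II/D Weyl components are shown to vanish, the spacetime is at most Weyl type III, contradicting the genuineness assumption, which establishes the non-existence claim.
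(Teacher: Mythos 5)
Your proposal is correct and takes essentially the same route as the paper, whose entire proof consists of the remark that ``the same arguments'' as for Proposition \ref{prop_nonE5DTN} --- i.e., the five-dimensional non-existence argument for genuine type II/D universal spacetimes in section 4 of \cite{Herviketal15} --- carry over to the {T-III}$_0$ setting. Your key observation, that the b.w.\ $0$ part of any rank-2 tensor $E_{ab}$ of the form \eqref{TIII} must still equal $\lambda g_{ab}$ with $\lambda$ constant, so the additional negative-boost-weight Ricci data $\psi'_i$, $\omega'$ cannot disturb the b.w.\ $0$ (and positive b.w.) constraints that drive the contradiction, is precisely the justification the paper leaves implicit.
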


\section{Kerr--Schild transformations of Einstein Kundt metrics}
\label{sec_KS}

A Kerr--Schild transformation of a metric $\bar g_{ab}$ is a transformation with
the transformed metric $g_{ab}$ being of the form
\begin{equation}
g_{ab} = \bar g_{ab} + 2 \H \ell_a \ell_b, \qquad
g^{ab} = (\bar g^{-1})^{ab} - 2 \H \ell^a \ell^b,
\end{equation}
where $\H$ is an arbitrary function and $\bl$ is a null vector.

Let us assume that the transformed metric $g_{ab}$ is Kundt with $\bl$ corresponding to the congruence of non-expanding,
non-shearing and non-twisting affinelly parametrized null geodesics.
Since
\begin{equation}
\kappa_i = \bar \kappa_i, \qquad
L_{10} = \bar L_{10}, \qquad
\rho_{ij} = \bar \rho_{ij},
\end{equation}
the vector $\bl$ has the same above mentioned geometrical properties in the background metric $\bar g_{ab}$ and thus
the background spacetime is necessarily Kundt as well.

Without loss of generality, we can always set the frame such that $L_{1i} = \tau_i$.
The frame components of the Ricci tensor then read
\begin{align}
\omega &= \bar \omega = 0, \qquad
\psi_i = \bar \psi_i, \qquad
\phi = \bar \phi + \D^2\H, \qquad
\phi_{ij} = \bar \phi_{ij}, \label{KS-omega}\\
\psi'_i &= \bar \psi'_i + \delta_i \D\H + \H \bar\psi_i, \\
\omega' &= \bar \omega' - \Xi_{ii} - \frac{2 \H}{n-1} \left( (n-3) \bar\phi - \bar\phi_{ii} \right) \label{KS-omegap}
\end{align}
and the independent components of the Weyl tensor are given by
\begin{align}
\Omega_{ij} &= \bar \Omega_{ij} = 0, \qquad
\Psi_{ijk} = \bar \Psi_{ijk} = \frac{2 \bar\psi_l}{n-2} \delta_{i[j} \delta_{k]l}, \qquad
\Phi^\text{A}_{ij} = \bar \Phi^\text{A}_{ij},\label{KS-Omega} \\
\Phi_{ijkl} &= \bar \Phi_{ijkl} + 4 \frac{\D^2\H}{(n-1)(n-2)} \delta_{i[k} \delta_{l]j}, \\
\Psi'_{ijk} &= \bar\Psi'_{ijk} + 2 \frac{\psi'_l - \bar\psi'_l}{n-2} \delta_{i[j} \delta_{k]l}, \\
\Omega'_{ij} &= \bar \Omega'_{ij} - \left( \Xi_{ij} - \frac{1}{n-2} \Xi_{kk} \delta_{ij} \right),\label{KS-Omegap}
\end{align}
where
\begin{equation}
\Xi_{ij} = \delta_{(i}\delta_{j)}\H + \rho'_{(ij)} \D\H + 2 \tau_{(i} \delta_{j)}\H + M^k_{(ij)} \delta_k\H
+ 2 \H \left( \tau_i \tau_j + \bar\Phi_{(ij)} + \frac{\bar\phi_{ij}}{n-2} \right).
\label{Xi}
\end{equation}
Note that $\D = \bar\D$, $\delta_i = \bar\delta_i$, and since the spacetime is Kundt, 
\begin{align}
\tau_i &= \bar \tau_i, \qquad
\rho'_{ij} = \bar \rho'_{ij}, \qquad
M^i_{jk} = \bar M^i_{jk}. \label{KS-tau}
\end{align}

Furthermore, for Einstein background spacetimes, i.e.\
\begin{equation}
\bar R_{ab} = \lambda \bar g_{ab},
\end{equation}
$S_{ab}$ is of type III iff $\D^2\H = 0$ and then the Ricci tensor of the full metric $g_{ab}$ reads
\begin{equation}
R_{ab} = \lambda g_{ab} + 2\delta_i \D\H \, \ell_{(a} m^{(i)}_{b)} + \xi \ell_a \ell_b,
\end{equation}
where
\begin{equation}
\xi = - \hat\nabla^2 \H - 2 \tau_i \delta_i\H - 2 \H \left( \tau_i \tau_i + \bar\Phi + \frac{n-2}{n-1} \lambda \right), \ \
\hat\nabla^2 = h^{ab} \nabla_a \nabla_b, \ \
h_{ab} = g_{ab} - 2 \ell_{(a} n_{b)}.
\end{equation}
Obviously, if in addition $\delta_i \D\H = 0$, $S_{ab}$ is of type N.
Possible combinations of the Weyl and trace-free Ricci types of  transformed metrics and background Einstein metrics
are summarized in Table \ref{table:KS}. Note that the most general Weyl type of Einstein Kundt spacetimes is type II.

\begin{table}[h]
	\centering
	\begin{tabular}{cc|cc}
		& & \multicolumn{2}{c}{types of $g_{ab}$} \\
		Weyl type of $\bar g_{ab}$ & conditions & Weyl & TF Ricci\\
		\hline
		II & ---  & II & III, N \\
		III & $\bar\Psi'_{ijk} + \frac{2}{n-2} \delta_{i[j} \delta_{k]} \D\H \neq 0$ & III & III, N \\
		III & $\bar\Psi'_{ijk} + \frac{2}{n-2} \delta_{i[j} \delta_{k]} \D\H = 0$ & N & III \\
		N & --- & III & III \\
		N & --- & N & N
	\end{tabular}
	\caption{Possible combinations of the Weyl and trace-free Ricci types of the Kerr--Schild transformed metric $g_{ab}$ with $\D^2{\cal H}=0$ depending on the Weyl type of the background Einstein metric $\bar{g}_{ab}$.}
	\label{table:KS}
\end{table}

\section{Variation of the FKWC basis up to order 6}
\label{FKWC}
To show what other terms than $\Box^n S_{ab}$ may appear in  field equations
for generic TN spacetimes, we provide a list of variations
 $\frac{1}{\sqrt{-g}} \frac{\delta (\sqrt{-g}I_k)}{\delta g^{ab}}$ of FKWC basis elements
\cite{FKWC} for Weyl type III TN spacetimes.

Any scalar curvature polynomial of order 6 (in derivatives of the metric) can be expressed in terms of 22 FKWC basis elements \cite{FKWC}. 
However, since variations of 7 of these are related to variations of the rest of the scalars via total divergence, only 15 of these 22 basis elements possess a nontrivial independent metric variation \cite{FKWCvariation, CSIuniversal}. 
Therefore, a general gravitational Lagrangian $\mathcal{L}$ of order 6 can be expanded in those 15 curvature invariants. 

For $CSI$ (and thus in particular for TN) spacetimes, variation of only 13 of these is non-vanishing and 4 of the remaining 13 scalars ($R^2$, $R^3$, $R R\indices{_{ab}}R\indices{^{ab}}$ and $R R\indices{_{abcd}}R\indices{^{abcd}}$) are functions of invariants of lower orders ($R$, $R\indices{_{ab}}R\indices{^{ab}}$ and $R\indices{_{abcd}}R\indices{^{abcd}}$) and hence their variation can be easily computed employing variations of their lower order counter-terms. 

Thus, given any gravitational Lagrangian of order 6, to compute the form of the field equations for TN spacetimes, 
metric variation of only 9 from the total of 22 FKWC invariants is needed. 
Moreover, if the TN metric is of Weyl type III, the form of these variations reduces dramatically:

\begin{align}
&R:
\quad -\frac{d-2}{d}Rg_{ab} + S_{ab},\\
&R\indices{_{ab}}R\indices{^{ab}}:
\quad -\frac{d-4}{2d^2} R^2 g_{ab} +\frac{2(d-2)}{d(d-1)}R S_{ab} +\square S_{ab} ,\\
&R\indices{_{abcd}}R\indices{^{abcd}}:
\quad -\frac{d-4}{d^2(d-1)} R^2 g_{ab} - \frac{4(d-2)}{d(d-1)}R S_{ab} + 4\square S_{ab} 
+2C\indices{_a^{cde}}C\indices{_{bcde}}, \\
&R\indices{_{ab}} \square R\indices{^{ab}}:
\quad -\frac{2R}{d(d-1)}\square S_{ab} + \square^2 S_{ab} ,\\
&R\indices{_{pq}}R\indices{^{p}_r}R\indices{^{qr}}:
\quad -\frac{d-6}{2d^3} R^3 g_{ab} +\frac{3(d-3)}{d^2(d-1)}R^2 S_{ab} + \frac{3R}{d} \square S_{ab}, \\
&R\indices{_{pq}}R\indices{_{rs}}R\indices{^{prqs}}:
\quad -\frac{d-6}{2d^3} R^3 g_{ab} + \frac{3d^2-10d +9}{d^2(d-1)^2}R^2 S_{ab} - \frac{3-2d}{d(d-1)}R \square S_{ab}, \\
&R\indices{_{pq}} R\indices{^{p}_{rst}}R\indices{^{qrst}}:
\quad -\frac{d-6}{d^3(d-1)}R^3 g_{ab} - \frac{2(2d^2-7d+9)}{d^2(d-1)^2} R^2 S_{ab} - \frac{4 R}{d-1}\square S_{ab}
+\frac{2 R}{d}C\indices{_{a}^{cde}}C\indices{_{bcde}} \nonumber \\
&\quad + C\indices{^{cdef}} \nabla_d \nabla\indices{_{(a}} C\indices{_{b)cef}}
+ C\indices{_{(a|cde|}} \square C\indices{_{b)}^{cde}} + \nabla\indices{_{(a}} C\indices{^{cfde}}\nabla\indices{_{|f|}}C\indices{_{b)cde}}
+ \nabla_f C\indices{_{bcde}}\nabla^f C\indices{_a^{cde}} ,\\
&R\indices{_{pqrs}} R\indices{^{pqab}}R\indices{^{rs}_{ab}}:
\quad -\frac{2(d-6)}{d^3(d-1)^2}R^3 g_{ab} + \frac{36R^2}{d^2(d-1)^2} S_{ab} - \frac{24 R^2}{d(d-1)^2}S_{ab} + \frac{24 R}{d(d-1)}\square S_{ab} \nonumber \\
&\quad + \frac{6(d+2)}{d(d-1)}RC\indices{_a^{cde}}C\indices{_{bcde}} +6\nabla_c C\indices{_{bfde}} \nabla^f C\indices{_a^{cde}} ,\\
&R\indices{_{prqs}} R\indices{^{p}_a^q_b}R\indices{^{rasb}}:
\quad-\frac{(d-6)(d-2)}{2d^3(d-1)^2}R^3 g_{ab} + 9 \frac{d-2}{d^2(d-1)^2} R^2 S_{ab}
- \frac{3 R}{d(d-1)}\square S_{ab} \nonumber \\
&\quad - \frac{3(d+2)}{2d(d-1)}R C\indices{_a^{cde}}C\indices{_{bcde}}
- 3\nabla_e C\indices{_{bdcf}} \nabla^f C\indices{_a^{cde}}
+ 3 C\indices{^{cdef}} \nabla_f \nabla_d C\indices{_{acbe}}.\label{FKWC-C9}
\end{align}

\section{Examples of type III TNS spacetimes and corresponding vacuum solutions to quadratic gravity}
\label{sec_examples}

\tb{
In this section, we employ  proposition \ref{prop_KS_TNS} to construct Weyl type III TNS spacetimes using the Kerr--Schild transformation \eqref{KSUK} and we find new vacuum solutions to quadratic gravity.}

\subsection{\tb{Ricci flat examples}}

\tb{For simplicity, as a background metric we consider 
recurrent ($\bar\tau_i = 0$
implies  $\tau_i = 0$, see Appendix \ref{sec_KS}) Ricci--flat  Weyl type III universal spacetimes.
Such spacetimes belong to VSI class and admit a metric of the form
\eqref{Kundt_deg} with metric functions  \cite{Coleyetal06} 
\begin{equation}
  H = H^{(0)}(u,x) + \frac{1}{2} (F - W_{m,m})r, \quad
  W_2 = 0, \quad W_m = W_m(u,x), \quad
  g_{\alpha\beta}(u,x) = \delta_{\alpha\beta}
  \label{VSI}
\end{equation}
that satisfy the Ricci-flat condition
\begin{equation}
  \Delta H^{(0)} - \tfrac{1}{4} W_{mn} W_{mn} - 2 H^{(1)}_{,m} W_m - H^{(1)} W_{m,m} - W_{m,mu} = 0
  \label{VSI_Ricciflat}
\end{equation}
and $F = F(u,x^i)$ is subject to
\begin{equation}
  F_{,2} = 0, \quad F_{,m} = \Delta W_m,
  \label{VSI_F}
\end{equation}
where $W_{mn} = W_{m,n} - W_{n,m}$ and $\Delta \equiv \partial^i\partial_i$ is the spacial Laplacian.
The natural null frame of VSI metrics
\begin{equation}
  \bar\ell_a \dd x^a = \dd u, \quad
  \bar n_a \dd x^a = \dd r + H \dd u + W_\alpha \dd x^\alpha, \quad
  \bar m_a^{(\alpha)} \dd x^a = \dd x^\alpha,
\end{equation}
is parallelly propagated in the recurrent case and indeed $\bar\tau_i = \bar L_{1i} = 0$.}
\tb{For  geodetic $\bl$  and $\D f = 0$ (i.e., $\partial_r f = 0$),
	the action of the operator $\mathcal{D}$ introduced in \eqref{calD} on $f$ reduces to
	\begin{equation}
	\mathcal{D} f = \Delta f.
	\end{equation}
}

\tb{The rank-2 tensor $F_2$ vanishes identically for the background metric since $\bar\tau_i = 0$
and it remains to satisfy $F_0 = 0$, i.e.,
\begin{equation}
  \bar\Psi'_{ijk} \bar\Psi'_{ijk} = 2 \bar\Psi'_i \bar\Psi'_i,
  \label{F0_VSI}
\end{equation}
where \tb{ $\bar\Psi'_i = -H^{(1)}_{,i}$} and $\bar\Psi'_{ijk} = \frac{1}{2}W_{kj,i}$.
Note that the trivial solution with both sides of \eqref{F0_VSI} vanishing  corresponds to Weyl type N.}

\tb{In four dimensions,  $F_0$  vanishes identically and thus in this case, all Kundt metrics obeying \eqref{VSI} and \eqref{VSI_F} are TNS (c.f. proposition \ref{prop_TNS}). In particular, note that the Einstein equation \eqref{VSI_Ricciflat} is not a necessary condition for TNS. However, it is  useful to start with  a background Einstein spacetime and express $H^{(0)}$ as $H^{(0)}=H^{(0)}_{E}+ \mathcal{H}$, where $H^{(0)}_{E}$ corresponds to a background Einstein solution of \eqref{VSI_Ricciflat} and $\mathcal{H}$ represents the Kerr--Schild transformation. Then the 
field equation of quadratic gravity \eqref{QGeq} reduces to
\begin{equation}
  \left( \Delta  + \tfrac{1}{\kappa \beta}  \right) \Delta  \mathcal{H} = 0.
\end{equation}
 Clearly, the Kerr--Schild transformations with $\Delta  \mathcal{H} = 0$ takes the metric from an Einstein spacetime to an Einstein spacetime (c.f. eq. \eqref{VSI_Ricciflat}). The non-Einstein vacuum solutions to quadratic gravity are obtained by  solving }
\begin{equation}
\tb{\left( \Delta  + \tfrac{1}{\kappa \beta}  \right)  \mathcal{H} = 0.}
\label{QGeq_pureQG}
\end{equation}
\tb{ An example of such a solution is }
\be
\tb{	\mathcal{H}=\left({c_1}(u) e^{\sqrt{k_1}x}+{c_2}(u) e^{-\sqrt{{k_1}}x}\right)\left({c_3}(u) e^{\sqrt{k_2}
	y}+{c_4}(u) e^{-\sqrt{{k_2}}y}\right),\ \  k_1+k_2+\tfrac{1}{\kappa\beta}=0.}
\ee
\tb{To conclude, by adding a solution of \eqref{QGeq_pureQG},
 $\mathcal{H}$, to any solution of \eqref{VSI_Ricciflat}, $H^{(0)}_{E}$, we obtain a vacuum solution to quadratic gravity, $H^{(0)}=H^{(0)}_{E}+ \mathcal{H}$.}

\tb{In contrast with the four-dimensional case, in higher dimensions the equation \eqref{F0_VSI} is non-trivial.
An explicit  solution of \eqref{VSI_F} and \eqref{F0_VSI} is, for instance,
\begin{equation}
  W_3 = (\alpha(u) \sin x_3 + \beta(u) \cos x_3) e^{x_4}, \quad
  W_4 = (\gamma(u) \sin x_3 + \delta(u) \cos x_3) e^{x_4}, \quad
  F = F(u), \label{HD-VSI}
\end{equation}
where $\alpha$, $\beta$, $\gamma$, $\delta$, $F$ are arbitrary functions of $u$
and all other $W_m$ vanish.
If moreover $\delta = -\alpha$ and $\gamma = \beta$, then the Ricci-flat condition \eqref{VSI_Ricciflat}
simplifies to the Laplace equation $\Delta H^{(0)} = 0$ for $H^{(0)}$.}

\tb{The metric \eqref{Kundt_deg}, satisfying \eqref{VSI}, \eqref{VSI_F} and \eqref{HD-VSI} is thus TNS in any dimension. To find a solution to quadratic gravity one can proceed similarly as in four dimensions.
}

\subsection{\tb{Examples with a non-vanishing Ricci scalar}}
\label{sec_ex_Ricci}

\tb{So far we have studied cases with vanishing Ricci scalar. In this section,
using propositions \ref{prop_warp_U} and \ref{prop_KS_TNS},  we construct
examples of Weyl type III TNS metrics with negative Ricci scalar. As a background metric $\bg_\text{UK}$ we use  a warp product of any of the above-mentioned Ricci-flat VSI metrics satisfying $F_0=0$
and one extra flat dimension \eqref{warp_productU}. 
The Ricci scalar of $\bg_\text{UK}$ 
is $\bar R = d(d-1)\lambda$.}
The d'Alembert operators of the warped and seed metric are related by
\begin{equation}
  \bar\Box f = -\lambda z^2 (\tilde\Box f + \partial_z\partial_z f) + (d-2) \lambda z \partial_z f.
\end{equation}
Choosing a parallelly propagated frame 
\begin{align}
  &\bar\ell_a \dd x^a = \dd u, \quad
  \bar n_a \dd x^a = \tfrac{1}{-\lambda z^2}\left(\dd r + \left(H - \tfrac{r^2}{2 z^2}\right) \dd u + W_i \dd x^i - \tfrac{r}{z} \dd z\right), \nonumber\\
  &\bar m_a^{(i)} \dd x^a = \tfrac{1}{\sqrt{-\lambda z^2}} \dd x^i, \quad
  \bar m_a^{(z)} \dd x^a = \tfrac{1}{\sqrt{-\lambda z^2}} \left(\dd z + \tfrac{r}{z} \dd u\right),
\end{align}
a straightforward calculation shows that $\bar\tau_i \bar\tau_i = -\lambda$, $\bar L_{1i} = \bar\tau_i$
and $\bar\tau_i \bar\delta_i = -\lambda(z \bar\nabla_z + r \bar\nabla_r)$. \tb{
Hence, the Kerr--Schild transformation \eqref{KSUK} with $\ell_a \dd x^a = \dd u$ and
$\bg_\text{UK}$ being a warp product \eqref{warp_productU}
of a universal Weyl type III VSI metric \eqref{Kundt_deg}, thus obeying \eqref{VSI}--\eqref{VSI_F}, and \eqref{F0_VSI},
is a Weyl type III TNS metric with $R = d(d-1)\lambda$.
 For $f$, $\partial_r f = 0$, the operator $\mathcal{D}$ defined in \eqref{calD}  reduces to
}
\begin{equation}
  \mathcal{D}f = -\lambda z^2 \Delta f + (d-6) \lambda z \partial_z f + 2(d-3)\lambda f,
\end{equation}
where $\Delta = \tilde\Delta + \partial_z\partial_z$.

\end{document}